\newtheorem{theorem}{Theorem}[section]
\newtheorem{proposition}[theorem]{Proposition}
\newtheorem{lemma}[theorem]{Lemma}
\newtheorem{corollary}[theorem]{Corollary}
{\theorembodyfont{\rmfamily} \newtheorem{defi}[theorem]{Definition}}
\newtheorem{rema}[theorem]{Remark}
\newtheorem{exam}[theorem]{Example}
\newtheorem{openprob}[theorem]{Open Problem}
\newenvironment{definition}[1]{\begin{defi}[{#1}]\rm}{\hfill $\triangleleft$\end{defi}}
\newenvironment{proof}{\begin{trivlist}\item[]{\bf
Proof.}}{\hfill {\sc qed}\end{trivlist}}
\newtheorem{definition_th}{Definition}[section]  
\newtheorem{algo_th}[definition_th]{Algorithm}
\newcommand{\str}[1]{\mathfrak{#1}}
\newcommand{\cat}[1]{\mathsf{#1}}
\newcommand{\ra}{\rightarrow}
\newcommand{\lra}{\leftrightarrow}
\renewcommand{\to}{\rightarrow}
\newcommand{\To}{\Rightarrow}
\newcommand{\tup}[1]{\langle #1 \rangle}
\newcommand{\ul}[1]{\underline{#1}} 
\newcommand{\sse}{\subseteq}
\newcommand{\Set}{\cat{Set}}   
\newcommand{\Pow}{\mathcal{P}} 
\newcommand{\cPo}{\mathcal{Q}} 
\newcommand{\N}{\mathcal{N}}  
\newcommand{\Mon}{\mathcal{M}}
\newcommand{\F}{\mathcal{F}} 
\newcommand{\transp}[1]{\widehat{#1}} 
\newcommand{\T}{\mathbb{T}}
\newcommand{\M}{\mathbb{M}} 
\newcommand{\Kl}[1]{\mathcal{K} \!\ell(#1)} 
\newcommand{\EM}[1]{\mathcal{E}\!\mathcal{M}(#1)}  
\newcommand{\kcirc}{\ast}  
\newcommand{\klIter}[2]{#1^{[#2]}} 
\newcommand{\Sig}{\Sigma} 
\newcommand{\sig}{\sigma} 
\newcommand{\syn}[1]{\ul{#1}} 
\renewcommand{\phi}{\varphi}
\newcommand{\dyndiamod}[1]{\tup{#1}} %
\newcommand{\dynmod}[1]{\dyndiamod{#1}}
\newcommand{\Fm}{\mathcal{F}}  
\newcommand{\Prop}{\mathsf{Prop}} 
\newcommand{\Log}{\mathcal{L}} 
\newcommand{\Logdynseqstartest}{\Log(\theta,;,^*,?)}
\newcommand{\Ax}{\mathrm{Ax}}  
\newcommand{\Fr}{\mathrm{Fr}}  
\newcommand{\Ru}{\mathrm{Ru}}  
\newcommand{\sem}[1]{[\![{#1}]\!]}
\newcommand{\truthset}[2]{\sem{#1}^{#2}}
\newcommand{\Lbls}{A} 
\newcommand{\atLbls}{{A_0}} 
\newcommand{\atProps}{{P_0}} 
\newcommand{\AtLbls}{\atLbls} 
\newcommand{\AtProps}{\atProps} 
\newcommand{\plift}{\lambda} 
\newcommand{\Plift}{\Lambda} 
\newcommand{\yon}[1]{\breve{#1}} 
\newcommand{\pwaxiom}[2]{\phi(#1,#2,p)} 
\newcommand{\pwaxiomm}[3]{\phi(#1,#2,#3)} 
\renewcommand{\phi}{\varphi}
\newcommand{\onestepsem}[1]{\lsem #1 \rsem_{{}_1}}
\newcommand{\lsem}{[\![}
\newcommand{\rsem}{]\!]}
\newcommand{\nFL}[1]{\mathit{Cl}(#1)}  
\newcommand{\ClosedPhi}{\Phi} 
\newcommand{\charf}[1]{\xi_{#1}}
\newcommand{\toFL}[1]{#1^\sharp}
\newcommand{\toPS}[1]{#1_S}
\newcommand{\oneDer}{\vdash^1_{\Log}} 
\newcommand{\plDer}{\vdash_{PL}}
\newcommand{\Der}{\vdash_{\Log}}
\newcommand{\oneSat}{\models^1} 
\newcommand{\hthanks}{\thanks{Supported by NWO-Veni grant 639.021.231.}}
\newcommand{\cthanks}{\thanks{Supported by University of Strathclyde starter grant.}}
\newcommand{\benum}{\begin{enumerate}}
\newcommand{\eenum}{\end{enumerate}}
\newcommand{\bi}{\begin{itemize}}
\newcommand{\ei}{\end{itemize}}
\newcommand{\ba}{\begin{array}}
\newcommand{\ea}{\end{array}}
\newcommand{\btab}{\begin{tabular}}
\newcommand{\etab}{\end{tabular}}
\newcommand{\bc}{\begin{center}}
\newcommand{\ec}{\end{center}}
\newcommand{\beq}{\begin{equation}}
\newcommand{\eeq}{\end{equation}}
\title{Weak Completeness of 
  Coalgebraic Dynamic Logics}
\author{Helle Hvid Hansen\hthanks
\institute{
Delft University of Technology\\
Delft, The Netherlands}
\email{h.h.hansen@tudelft.nl}
\and
Clemens Kupke\cthanks
\institute{
University of Strathclyde\\
Glasgow, United Kingdom}
\email{clemens.kupke@strath.ac.uk}
}
\begin{document}
\maketitle

\begin{abstract}
We present a coalgebraic generalisation of Fischer and Ladner's 
Propositional Dynamic Logic (PDL) and Parikh's Game Logic (GL).
In earlier work, we proved a generic strong completeness result for coalgebraic dynamic logics without iteration. The coalgebraic semantics of such programs is given by a monad $T$, and modalities are interpreted via a predicate lifting $\plift$ whose transpose is a monad morphism from $T$ to the neighbourhood monad. In this paper, we show that if the monad $T$ carries a complete semilattice structure, then we can define an iteration construct, and suitable notions of diamond-likeness and box-likeness of predicate-liftings which allows for the definition of an axiomatisation parametric in $T$, $\lambda$ and a chosen set of pointwise program operations. As our main result, we show that if the pointwise operations are ``negation-free'' and Kleisli composition left-distributes over the induced join on Kleisli arrows, then this axiomatisation is weakly complete with respect to the class of standard models. As special instances, we recover the weak completeness of PDL and of dual-free Game Logic. As a modest new result we obtain completeness for dual-free GL extended with intersection (demonic choice) of games.
\end{abstract}

\section{Introduction}


Propositional Dynamic Logic (PDL) \cite{Fischer-Ladner:PDL-Reg} and its close cousin Game Logic (GL) \cite{Parikh85} are expressive, yet computationally well-behaved extensions of modal logics.
Crucial for the increased expressiveness of these logics is the *-operator (iteration) that allows to compute certain, relatively simple fixpoint properties such as
reachability or safety. 
This feature  comes at a price: completeness proofs for deduction systems of logics with fixpoint operators are notoriously difficult. 
The paradigmatic example for this phenomenon is provided by the modal $\mu$-calculus: Walukiewicz's completeness proof from~\cite{Walukiewicz00} for Kozen's axiomatisation \cite{Kozen:mu}
is highly non-trivial and presently not widely understood.

Our main contribution is a completeness proof for coalgebraic dynamic logics {\em with iteration}.
We introduced coalgebraic dynamic logics in our previous work~\cite{HKL:CPDL-TCS-2014} as a natural generalisation of PDL and GL with the aim to study various dynamic logics 
within a uniform framework that is parametric in the type of models under consideration, or - categorically speaking - parametric in a given monad.
In~\cite{HKL:CPDL-TCS-2014} we presented
an initial soundness and strong completeness result for such logics. Crucially, however, this only covered {\em iteration-free variants}. 
This paper provides an important next step by extending our previous work to the coalgebraic dynamic logic with iteration.
As in the case of PDL, strong completeness fails, hence our coalgebraic dynamic logics with iteration are (only) proved weakly complete. 
While the concrete instances of our general completeness result are well-known \cite{KozenParikh81:PDL,Parikh85}, the abstract coalgebraic nature of our
proof allows us to provide a clear analysis of the general requirements needed for the PDL/GL completeness proof,
leading to the notions of box- and diamond-like modalities and of a left-quantalic monad.
As a modest new completeness result we obtain completeness for dual-free GL extended by intersection (demonic choice)
of games.

At this relatively early stage of development our work has to be mainly regarded as a proof-of-concept result: we provide evidence for the claim that completeness
proofs for so-called exogenous modal logics can be generalised to the coalgebraic level. This opens up a number of
promising directions for future research which we will discuss in the Conclusion.


\section{Coalgebraic Dynamic Logic}
\subsection{Coalgebraic modal logic}

We assume some familiarity with the basic theory of coalgebra \cite{rutten:uc-j}, monads and categories \cite{MacLane}.
We start by recalling basic notions from coalgebraic modal logic, and fixing notation. For more information and background on coalgebraic modal logic, we refer to \cite{KupkePatt:CML-survey}.

For a set $X$, we define $\Prop(X)$ to be the set of propositional formulas over $X$. Formally, $\Prop(X)$ is generated by the grammar:
$\Prop(X) \ni \phi ::= \; x \in X \mid \top \mid \lnot \phi \mid \phi \land \phi$.

A \emph{modal signature} $\Plift$ is a collection of modalities with associated arities. In this paper, we will only consider unary modalities.
For a set $X$, we denote by $\Plift(X)$ the set of expressions
$\Plift(X) = \{ \Diamond x \mid \Diamond \in \Plift\}$.
The set $\Fm(\Plift,\atProps)$ of $\Plift$-modal formulas over $\Plift$ and a set $\atProps$ of atomic propositions is given by:
\[\Fm(\Plift,\atProps) \ni \phi ::= p \in \atProps \mid \top \mid \lnot\phi \mid \phi\land\phi \mid \Diamond\phi \qquad \Diamond \in \Plift.\]

Let $T\colon \Set\to\Set$ be a functor.
A \emph{$T$-coalgebraic semantics} of $\Fm(\Plift,\atProps)$ is given by associating
with each $\Diamond \in \Plift$ a predicate lifting
$\plift\colon \cPo \To \cPo \circ T$, where $\cPo$ denotes the contravariant powerset functor. 
A \emph{$T$-model} $(X, \gamma,V)$ then consists of a carrier set $X$, a $T$-coalgebra $\gamma\colon X \to TX$,
and a valuation $V \colon \atProps \to \Pow(X)$ that defines truth sets of atomic propositions as $\sem{p} = V(p)$. The truth sets of complex formulas is defined inductively as usual with the modal case given by: $\sem{\Diamond\phi} = \gamma^{-1}(\plift_X(\sem{\phi}))$.

A \emph{modal logic} $\Log = (\Plift,\Ax,\Fr,\Ru)$ consists of a modal signature $\Plift$, a collection of rank-1 axioms $\Ax \sse \Prop(\Plift(\Prop(\atProps)))$,
a collection $\Fr \sse \Fm(\Plift,\atProps)$ of frame conditions, and
a collection  of inference rules $\Ru \subseteq \Fm(\Plift,\atProps) \times \Fm(\Plift,\atProps)$ which contains 
the \emph{congruence rule}:
from $\phi \lra \psi$ 
infer $\Diamond \phi \lra \Diamond \psi$ 
for any modality $\Diamond \in \Plift$.

Given a modal logic $\Log = (\Plift,\Ax,\Fr, \Ru)$,  the set 
of $\Log$-derivable formulas 
is the smallest subset of $\Fm(\Plift,\atProps)$ that contains $\Ax\cup\Fr$, all propositional tautologies, is closed under modus ponens, uniform substitution
and under applications of substitution instances of rules from $\Ru$. 
%
%
For a formula $\phi \in \Fm(\Plift,\atProps)$ we write $\vdash_\Log \phi$ if $\phi$ is $\Log$-derivable.
Furthermore $\phi$ is \emph{$\Log$-consistent}
if $\not\vdash_\Log \lnot\phi$ and a finite set $\Phi \subseteq \Fm(\Plift,\atProps)$ is $\Log$-consistent
if  the formula $\bigwedge \Phi$ is $\Log$-consistent.


Next, we recall the following \emph{one-step notions} from the theory of coalgebraic logic. Let $X$ be a set.
\begin{itemize}
\item A formula $\varphi \in \Prop(\Plift(\Pow(X)))$ is {\em one-step $\Log$-derivable}, denoted $\vdash^1_\Log \varphi$, if 
$\varphi$ is propositionally entailed by the set $\{ \psi \tau \mid \tau: P \to \Pow(X),\psi \in \Ax\}$.
\item 
A set $\Phi \sse \Prop(\Plift(\Pow(X)))$ is called {\em one-step $\Log$-consistent} if there are no formulas
$\varphi_1,\dots,\varphi_n \in \Phi$ such that $\vdash^1_\Log \varphi_1 \wedge \dots \wedge\varphi_n \to \bot$.
\item
Let $T$ be a $\Set$-functor and assume a predicate lifting $\plift^\Diamond$ is given for each $\Diamond \in \Plift$.
For a formula $\varphi \in \Prop(\Plift(\Pow(X)))$ the {\em one-step semantics} $\onestepsem{\varphi} \sse T X$ is defined
by putting $\onestepsem{\Diamond (U)} = \lambda^\Diamond_X(U)$ and by inductively
extending this definition to Boolean combinations of boxed formulas.
\item
For a set $\Phi \sse \Prop(\Plift(\Pow(X)))$ of formulas, we let $\onestepsem{\Phi} = \bigcap_{\varphi\in\Phi} \onestepsem{\varphi}$, and we say that $\Phi$ is \emph{one-step satisfiable} if $\onestepsem{\Phi} \neq \emptyset$.
\item
$\Log$ is called {\em one-step sound} if for any one-step derivable formula $\varphi \in\Prop(\Plift(\Pow(X)))$
we have $\onestepsem{\varphi} = T X$, i.e., if any such formula $\varphi$ is {\em one-step valid}.
\item
$\Log$ is called {\em one-step complete} if for every  finite set $X$ 
and every one-step consistent set $\Phi \subseteq  \Prop(\Plift(\Pow(X)))$ is one-step satisfiable. 
\end{itemize}

\subsection{Dynamic syntax and semantics}

In earlier work \cite{HKL:CPDL-TCS-2014}, we introduced the notion of
a coalgebraic dynamic logic for programs built from Kleisli composition,
pointwise operations and tests. Here we extend this notion to also include
iteration (Kleene star).

Throughout, we fix a countable set $\AtProps$ of atomic propositions,
a countable set $\AtLbls$ of atomic actions,
and a signature $\Sigma$ (of pointwise operations such as $\cup$ in PDL).
The set  $\Fm(\AtProps,\AtLbls,\Sigma)$ of \emph{dynamic formulas}
and the set $\Lbls = \Lbls(\AtProps,\AtLbls,\Sigma)$ of \emph{complex actions}
are defined by mutual induction:
\[\begin{array}{rcl}
\Fm(\atProps, \atLbls, \Sig) \ni \phi & ::= & p \in \atProps \mid \bot \mid \lnot\phi \mid \phi\land\phi \mid 
         \dynmod{\alpha}\phi
\\
\Lbls(\atProps, \atLbls, \Sig) \ni \alpha & ::= & a \in \atLbls \mid \alpha;\alpha \mid \syn{\sigma}(\alpha_1,\ldots,\alpha_n) \mid \alpha^* \mid \phi?
\end{array}
\]
where $\syn{\sigma} \in \Sigma$ is $n$-ary.

Dynamic formulas are interpreted in dynamic structures which consist of a
$T$-coalgebraic semantics with additional structure.
Operation symbols $\syn{\sigma} \in \Sigma$ will be interpreted
by pointwise defined operations on $(TX)^X$ induced by
natural operations $\sigma\colon T^n \To T$. More precisely,
if $\sigma\colon T^n \To T$ is a natural transformation,
then $\sigma_X^X \colon ((TX)^X)^n \to (TX)^X$ is defined by
$\sigma_X^X(f_1,\ldots,f_n)(x) = \sigma_X(f_1(x),\ldots,f_n(x))$.
A natural transformation
$\Sigma T \To T$ (when viewing $\Sigma$ as a $\Set$-functor)
corresponds to a collection of natural operations $\sigma\colon T^n \To T$,
one for each $\syn{\sigma}\in \Sigma$.

In order to define composition and tests of actions/programs/games,
$T$ must be a monad $(T,\mu,\eta)$ such that action composition
amounts to Kleisli composition for $T$.
In order to define iteration of programs, we need to assume that the monad
has the following property.

\begin{definition}{Left-quantalic monad}
  A monad $(T,\mu,\eta)$ is called \emph{left-quantalic} if for all sets $X$,
  $TX$ can be equipped with a sup-lattice structure (i.e., a complete, idempotent, join semilattice). We denote the empty join in $TX$ by $\bot_{TX}$.
We also require that when this join is lifted pointwise to 
the Kleisli Hom-sets $\Kl{T}(X,X)$, then Kleisli-composition left-distributes over joins:\\
\begin{minipage}{0.9\textwidth}
  \[\forall f, g_i\colon X \to TX, i\in I: \quad f \kcirc \bigvee_{i}g_i = \bigvee_{i} f\kcirc g_i.\]
  \end{minipage}
\end{definition}

It is well known that Eilenberg-Moore algebras of the powerset monad $\Pow$
are essentially sup-lattices, and that relation composition left-distributes over unions of relations, hence $\Pow$ is left-quantalic.
We observe that one way of showing that $T$ is left-quantalic is to show that
there is a morphism of monads $\tau\colon \Pow \To T$.

\begin{lemma}\label{lem:left-quantalic}
Let $(T,\mu,\eta)$ be a monad. If there is a monad morphism $\tau\colon \Pow \To T$, then  $(T,\mu,\eta)$ is left-quantalic.
\end{lemma}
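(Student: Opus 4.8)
The plan is to produce the required sup-lattice structure on each $TX$ as an Eilenberg--Moore algebra of $\Pow$, obtained by transporting the free $T$-algebra along $\tau$, and then to read off the distributivity law from the fact that Kleisli extensions are homomorphisms of free $T$-algebras. I will use throughout the remark already recorded above that Eilenberg--Moore algebras of $\Pow$ \emph{are} sup-lattices: a structure map $\alpha\colon\Pow A\to A$ is exactly a choice of join $\bigvee S:=\alpha(S)$ for every subset $S\subseteq A$, the two algebra laws being $\bigvee\{a\}=a$ and $\bigvee\{\bigvee S_j\mid j\}=\bigvee\bigcup_j S_j$, and morphisms of $\Pow$-algebras are precisely the join-preserving maps. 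Hence it suffices to equip $TX$ with a $\Pow$-algebra structure and to show that the maps appearing in Kleisli composition are $\Pow$-algebra morphisms.

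First I would invoke the standard fact that a monad morphism $\tau\colon\Pow\To T$ induces a functor $\EM{T}\to\EM{\Pow}$ which is the identity on underlying maps and sends a $T$-algebra $(A,a)$ to the $\Pow$-algebra $(A,a\circ\tau_A)$; the two $\Pow$-algebra laws for $(A,a\circ\tau_A)$ follow from the unit and multiplication laws of $\tau$, naturality of $\tau$, and the $T$-algebra laws of $(A,a)$. Applying this functor to the free $T$-algebra $(TX,\mu_X)$ gives the $\Pow$-algebra
\[ \alpha_X \;:=\; \mu_X\circ\tau_{TX}\colon\Pow(TX)\longrightarrow TX, \]
and hence a sup-lattice structure on $TX$ in which $\bigvee_{i}t_i=\alpha_X(\{t_i\mid i\in I\})$ and $\bot_{TX}=\alpha_X(\emptyset)$.

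Next I would verify left distributivity. As the join on $\Kl{T}(X,X)$ is pointwise, for $f,g_i\colon X\to TX$ and $x\in X$ we have $(f\kcirc\bigvee_i g_i)(x)=\klext{f}\bigl(\bigvee_i g_i(x)\bigr)$ and $(\bigvee_i f\kcirc g_i)(x)=\bigvee_i\klext{f}(g_i(x))$, so the law reduces to showing that every Kleisli extension $\klext{f}=\mu_X\circ Tf\colon TX\to TX$ preserves joins. A short diagram chase, using naturality of $\mu$ and associativity $\mu_X\circ\mu_{TX}=\mu_X\circ T\mu_X$, shows that $\klext{f}$ is a homomorphism of $T$-algebras $(TX,\mu_X)\to(TX,\mu_X)$. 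Since the induced functor $\EM{T}\to\EM{\Pow}$ is the identity on underlying maps, $\klext{f}$ is then also a morphism of the $\Pow$-algebras $(TX,\alpha_X)\to(TX,\alpha_X)$, i.e.\ a join-preserving map, which is exactly the desired identity.

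I do not expect a serious obstacle here: all the computations are routine once the right viewpoint is fixed. The conceptual key — and the step most easily missed — is the observation in the last paragraph that Kleisli extensions are $T$-algebra homomorphisms on free algebras, because this is what forces the transported join structure to be automatically compatible with Kleisli composition; the only real care needed is the bookkeeping of the monad-morphism and algebra laws in the two diagram chases.
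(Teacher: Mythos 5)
Your proof is correct and follows essentially the same route as the paper's: both transport the free $T$-algebra $(TX,\mu_X)$ along the functor $\EM{T}\to\EM{\Pow}$ induced by $\tau$ to obtain the sup-lattice structure $\mu_X\circ\tau_{TX}$ on $TX$, and both reduce left-distributivity to the fact that the maps implementing Kleisli composition preserve the induced join. The only (immaterial) difference is in that last step, where the paper argues that the factors $\mu_X$ and $Tf$ each preserve joins, while you observe directly that $\klext{f}=\mu_X\circ Tf$ is a $T$-algebra endomorphism of the free algebra and hence a $\Pow$-algebra morphism.
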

\begin{proof}
  A monad morphism $\tau\colon \Pow \To T$ induces a functor
  $\EM{T} \to \EM{\Pow}$ by pre-composition.
  It follows, in particular, that the free $T$-algebra is mapped to a
  sup-lattice  $(TX, \mu_X \circ \tau_{TX})$.
  We extend this sup-lattice structure on $TX$ pointwise to a
  sup-lattice structure on $\Kl{T}(X,X)$, that is,
  for all $\{g_i \mid i\in I\} \sse\Kl{T}(X,X)$,
  \[(\bigvee_i g_i)(x) = \mu_X(\tau_{TX}(\{g_i(x) \mid i\in I\})).\]
  Kleisli-composition distributes over this $\tau$-induced join
  since $\mu_X$ and $Tf$ preserve it, for all functions $f\colon X \to Y$,
  due to naturality of $\tau$, and these maps being $T$-algebra morphisms.
\end{proof}

Note that any natural transformation $\tau\colon \Pow \To T$ yields a natural transformation $1 \To \Pow \To T$, where $1 \To \Pow$ picks out the empty set, such that $T$ is pointed as defined in \cite{HKL:CPDL-TCS-2014}.

\begin{exam}\label{exa:left-quantalic}
  The three monads of particular interest to us were described in
  \cite{HKL:CPDL-TCS-2014}:
  The powerset monad $\Pow$,
  the monotone neighbourhood monad $\Mon$,
  the neighbourhood monad $\N$.
  These are all left-quantalic. For example,
  the transpose of the Kripke box
  $\transp{\Box} = \tau_X\colon \Pow{X} \to \Mon{X}$ defined by
  $\tau_X(U) = \{ V \sse X \mid U \sse V\}$ is a monad morphism.
  The join on $\Mon{X}$ induced by $\transp{\Box}$ is intersection of neighbourhood collections.
  Dually, the transpose of the Kripke diamond $\transp{\Diamond}_X(U) = \{ V \sse X \mid U \cap V \neq \emptyset \}$ is also a monad morphism $\Pow \To \Mon$, and its induced join is
  unions of neighbourhood collections.
\end{exam}

The generalisation of iteration for PDL-programs and GL-games is 
iterated Kleisli composition. Given $f\colon X \to TX$, we define
for all $n < \omega$:
\beq\label{eq:kleisli-iter}
\klIter{f}{0} = \eta_X, \qquad
\klIter{f}{n+1} = f\kcirc\klIter{f}{n}, \qquad
f^* = \bigvee_{n<\omega}\klIter{f}{n}
\eeq

\begin{definition}{Dynamic semantics}\label{def:dyn-sem}
  Let $\T = (T,\eta,\mu)$ be a left-quantalic monad,
  and $\theta \colon \Sig T \To T$ a natural $\Sig$-algebra.
A \emph{$(\AtProps,\AtLbls,\theta)$-dynamic $\T$-model}
$\str{M} = (X,\gamma_0,\plift,V)$ consists of
a set $X$,  
an interpretation of atomic actions $\transp{\gamma}_0\colon \atLbls \to (TX)^X$,
a unary predicate lifting $\plift\colon \cPo \To \cPo\circ T$
whose transpose $\transp{\plift}\colon T \To \N$ is a monad morphism, 
and a valuation $V\colon \AtProps \to \Pow(X)$.
We define the truth set $\truthset{\phi}{\str{M}}$ of dynamic formulas 
and the semantics $\transp{\gamma}\colon \Lbls \to (TX)^X$ of complex actions
in $\str{M}$ by mutual induction:
\[\begin{array}{lcl}
\multicolumn{3}{l}{
\truthset{p}{\str{M}} = V(p), \quad
\truthset{\phi\land\psi}{\str{M}} = \truthset{\phi}{\str{M}}\cap\truthset{\psi}{\str{M}}, \quad
\truthset{\lnot\phi}{\str{M}} = X\setminus \truthset{\phi}{\str{M}}, \quad
}
\\
\truthset{\dynmod{\alpha}\phi}{\str{M}} &=& 
  (\transp{\gamma}(\alpha)^{-1} \circ \lambda_X)(\truthset{\phi}{\str{M}}),
\\
\transp{\gamma}(\syn{\sigma}(\alpha_1,\ldots,\sigma_n)) &=& 
   \sigma_X^X(\transp{\gamma}(\alpha_1),\ldots,\transp{\gamma}(\alpha_n))
\qquad \text{ where } \syn{\sigma} \in \Sig \text{ is } n\text{-ary},
\\
\transp{\gamma}(\alpha;\beta) &=& \transp{\gamma}(\alpha)\kcirc\transp{\gamma}(\beta)
\qquad\qquad\qquad\quad\;\text{ (Kleisli composition)},
\\
\transp{\gamma}(\alpha^*) &=& \transp{\gamma}(\alpha)^*
\qquad\qquad\qquad\qquad\qquad\;\text{ (Kleisli iteration)},
\\
\transp{\gamma}(\phi?)(x) &=&  
  \eta_X(x)  \text{ if } x \in \truthset{\phi}{\str{M}},\;
  \bot_{TX} \text{ otherwise}.
\end{array}\]
We say that $\str{M}$ validates a formula $\phi$ if $\sem{\phi}^\str{M} = X$.
A coalgebra $\gamma \colon X \to (TX)^\Lbls$ is \emph{standard}
if it is generated by some
$\transp{\gamma}_0\colon \atLbls \to (TX)^X$ and $V\colon \atProps \to \Pow(X)$
as above, and 
we will also refer to $(X,\gamma,\plift,V)$ as a $\theta$-dynamic $\T$-model.
\end{definition}

Recall that PDL can be axiomatised using the box or using the diamond,
but the two axiomatisations differ. For example, the axioms for tests
depend on which modality is used. In the general setting we need to know
whether a predicate lifting corresponds to a box or a diamond.

\begin{definition}{Diamond-like, Box-like}
\label{def:diamond-box-like}
Let $\plift\colon \cPo \To \cPo\circ T$ be a predicate lifting for a left-quantalic monad $T$.
We say that\\[.5em]
\begin{minipage}[b]{0.9\textwidth}
\bi
\item{$\plift$ is \emph{diamond-like} if for all sets $X$, all $U\sse X$, and all $\{ t_i \mid i\in I\} \sse TX$:}
\quad
\[\bigvee_{i\in I} t_i \in \plift_X(U) \quad\text{ iff }\quad \exists i \in I :\; t_i \in \plift_X(U).\]
\item {$\plift$ is \emph{box-like} if for all sets $X$, all $U\sse X$, and all $\{ t_i \mid i\in I\} \sse TX$:}
\quad
\[\bigvee_{i \in I}t_i \in \plift_X(U) \quad\text{ iff }\quad \forall i \in I :\; t_i \in \plift_X(U).\]
\ei
\end{minipage}
\end{definition}

\begin{rema}
Note that $\plift$ is diamond-like iff $\plift_X(U)$ is a complete filter
of the semilattice $TX$ for all $U \sse X$. 
One also easily verifies that $\plift$ is diamond-like iff its Boolean dual is box-like.
  It is easy to see that if $\lambda$ is diamond-like then it is also diamond-like according to our ``old'' definition in \cite{HKL:CPDL-TCS-2014}, similarly for box-like. However, it is no longer the case that every predicate lifting is either box-like or diamond-like, e.g., for $T=\Pow$, $\plift_X(U) = \{ V \sse X \mid \emptyset \neq V \sse U\}$ is neither.  
\end{rema}



\begin{exam}
  It can easily be verified that the Kripke diamond (box) is indeed diamond-like (box-like) for $\Pow$.
  Taking $T=\Mon$, and union as join on $\Mon{X}$
  (i.e., the join induced by $\transp{\Diamond}$, cf.~Example~\ref{exa:left-quantalic}), then
  the monotonic neighbourhood modality $\plift_X(U) = \{ N \in \Mon{X} \mid U \in N\}$ is diamond-like, but taking intersection as the join on $\Mon{X}$ then $\plift$ is box-like.
  Similarly, $\plift$ is diamond-like when viewed as a neighbourhood modality for $\N$-coalgebras with union as join.
  Note that this shows that diamond-likeness does not imply monotonicity.
  We only have, if $\plift$ is diamond-like, then $\transp{\plift}\colon T \To \N$ is monotone.
\end{exam}

We will use the following crucial lemma about the Kleisli composition and predicate liftings.
\begin{lemma}\label{lem:comp}
Let $\plift\colon \cPo \To \cPo\circ T$ be a predicate lifting whose transpose  $\transp{\plift}\colon T \To \N$ is a monad morphism.
For all $f,g: X \to TX$, all $x\in X$ and all $U \sse X$, we have 
\[
(f \kcirc g)(x) \in \plift_X(U)
\iff
f(x) \in \plift_X(g^{-1}(\plift_X(U)).
\]
\end{lemma}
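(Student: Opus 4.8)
The plan is to unfold both sides of the equivalence through the transpose $\transp{\plift}$ and reduce the statement to exactly two facts: that $\transp{\plift}\colon T \To \N$ is a monad morphism, and that $\plift$ is natural. Throughout I would use the defining property of the transpose, $t \in \plift_X(U) \iff U \in \transp{\plift}_X(t)$ for $t\in TX$, $U\sse X$, together with the concrete description of the neighbourhood monad $\N$ as the double-dualisation (double contravariant powerset) monad: its functor action sends $h\colon A \to B$ to $\mathcal{N} \mapsto \{ V \sse B \mid h^{-1}(V) \in \mathcal{N}\}$, and its multiplication satisfies $U \in \mu^\N_X(\mathfrak{Q}) \iff \widehat{U} \in \mathfrak{Q}$, where $\widehat{U} = \{ \mathcal{N} \in \N X \mid U \in \mathcal{N}\}$. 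I would also use that Kleisli composition is $(f\kcirc g)(x) = \mu_X((Tg)(f(x)))$.

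First I would isolate the following sub-claim, which repackages the monad-morphism law: for every $s \in TTX$ and every $U \sse X$,
\[ U \in \transp{\plift}_X(\mu_X(s)) \quad\iff\quad \plift_X(U) \in \transp{\plift}_{TX}(s). \]
To prove it, apply the multiplication law $\transp{\plift}_X \circ \mu_X = \mu^\N_X \circ \N\transp{\plift}_X \circ \transp{\plift}_{TX}$ and peel off the three outer operations using the descriptions above: membership of $U$ in $\mu^\N_X(\cdots)$ becomes membership of $\widehat{U}$; the functor action $\N\transp{\plift}_X$ turns this into membership of the preimage $\transp{\plift}_X^{-1}(\widehat{U})$ in $\transp{\plift}_{TX}(s)$; and finally the key computation $\transp{\plift}_X^{-1}(\widehat{U}) = \plift_X(U)$, which is immediate from the defining property of the transpose, yields the claim.

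Then I would instantiate the sub-claim at $s = (Tg)(f(x))$, so that $\mu_X(s) = (f\kcirc g)(x)$. This reduces the left-hand side $(f\kcirc g)(x) \in \plift_X(U)$ to $\plift_X(U) \in \transp{\plift}_{TX}((Tg)(f(x)))$, which by the defining property of the transpose at the set $TX$ says exactly $(Tg)(f(x)) \in \plift_{TX}(\plift_X(U))$. Naturality of $\plift$ with respect to $g\colon X \to TX$ gives $(Tg)^{-1}(\plift_{TX}(V)) = \plift_X(g^{-1}(V))$ for all $V \sse TX$; taking $V = \plift_X(U)$ rewrites the previous line as $f(x) \in \plift_X(g^{-1}(\plift_X(U)))$, which is precisely the right-hand side, completing the proof.

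I expect the only genuine obstacle to be the bookkeeping inside the sub-claim: one must invoke the neighbourhood monad's multiplication correctly and, crucially, apply the transpose $\transp{\plift}_{TX}$ at the \emph{higher} carrier $TX$ rather than at $X$, keeping the two levels of powersets straight. Once the sub-claim is established, the remainder is a routine application of naturality. It is worth noting that no box-/diamond-likeness and no left-quantalic hypothesis enters here; the statement rests solely on $\transp{\plift}$ being a monad morphism.
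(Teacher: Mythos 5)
Your proof is correct and follows essentially the same route as the paper's: both unfold the statement through the transpose, use the monad-morphism law for $\mu$ (together with the computation $\transp{\plift}_X^{-1}(\widehat{U}) = \plift_X(U)$) to reduce the left-hand side to a membership statement at the carrier $TX$, and then peel off $Tg$ by naturality. The only cosmetic difference is that you conclude via naturality of $\plift$ itself, where the paper uses naturality of $\transp{\plift}$ together with the action of $\N$ on $g$ --- the same fact viewed through the transposition bijection.
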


\begin{proof} We have:\\
\begin{minipage}[t]{0.8\textwidth}
$\qquad\qquad\qquad\qquad\qquad\begin{array}[b]{rcl}
(f\kcirc g)(x) \in \plift_X(U)
& \mbox{iff} & \mu_X\left(T g \left(f(x)\right)\right)  \in \plift_X(U) \\
\mbox{\small (def.~of $\hat{\plift}$)}	& \mbox{ iff } &  U \in \hat{\plift}_X( \mu_X (T g (f(x))) \\
\mbox{\small ($\hat{\plift}$ monad morph.)} & \mbox{iff} & U  \in  \mu^\N_X \left( \N\hat{\plift}_X
    (\hat{\plift}_{T X} (Tg(f(x)))) \right) \\
  \mbox{\small (def.~of $\mu^N$)}  & \mbox{iff} & \eta_{\Pow(X)}(U) \in \N\hat{\lambda}_X
    \left(\hat{\plift}_{T X} (T g(f(x)))\right) 
\\
\mbox{\small (def.~of $\N$)} & \mbox{iff} & \hat{\plift}_X^{-1} \left( \eta_{\Pow(X)}(U) \right) \in \hat{\plift}_{T X} (T g(f(x))) 
\\
\mbox{\small (def.~of $\eta$)} & \mbox{iff} & \{ t \in TX \mid U \in \transp{\plift}_X(t) \} \in \hat{\lambda}_{T X} (Tg(f(x))) 
\\
\mbox{\small (def.~of $\hat{\plift}$)} & \mbox{iff} & \{ t \in TX \mid t \in \lambda_X( U) \} 
	 \in \hat{\lambda}_{T S} (Tg(f(x))) \\
\mbox{\small (naturality of $\hat{\lambda}$)} & \mbox{iff} & \{ t \in TX \mid t \in \plift_X( U ) \} 
	 \in \N g (\hat{\plift}_X ( f(x))) \\
\mbox{\small (def.~of $\N$)}  & \mbox{iff} & g^{-1} \left(\plift_X(U)\right) \in \hat{\plift}_X ( f(x)) \\
& \mbox{iff} & f(x) \in {\plift}_X(g^{-1}(\plift_X(U)))
\end{array}$
\end{minipage}
\end{proof}

\subsection{Coalgebraic dynamic logic}

Our notion of a coalgebraic dynamic logic relates to coalgebraic modal logic
in the same way that PDL relates to the basic modal logic \textbf{K}.
In the remainder of the paper, we assume that:
\bi
\item $\T = (T,\mu,\eta)$ is a left-quantalic monad with join $\bigvee\colon \Pow TX \to TX$, 
\item $\plift\colon \cPo \To \cPo \circ T$ is a diamond-like with respect to $(TX,\bigvee)$, monotonic predicate lifting whose transpose $\transp{\plift}\colon T \To \N$ is a monad morphism,
\item $\Sigma$ is a signature and for each $n$-ary $\syn{\sigma} \in \Sigma$ there is
a natural operation $\sig\colon T^n \To T$
and 
a natural operation $\chi\colon \N^n \To \N$ such that
$\transp{\plift}\circ\sigma = \chi\circ \transp{\plift}^n$.
We denote by $\theta$ the collection $\{\sig \mid \syn{\sigma}\in \Sigma\}$.
\ei
Using the last item above, we showed in
\cite[section~4]{HKL:CPDL-TCS-2014} how to associate to each operation symbol
$\syn{\sigma} \in \Sigma$ a rank-1 axiom
$\dynmod{\syn{\sigma}(\alpha_1,\ldots,\alpha_n)}p \lra \pwaxiom{\yon{\chi}}{\alpha_1,\ldots,\alpha_n}$. 
Briefly stated, we use that a $\chi\colon \N^n \To \N$
corresponds (via the Yoneda lemma) to an element $\yon{\chi}$ of the free Boolean algebra $\N(n \cdot \cPo(2))$ generated by $n \cdot \cPo(2)$.
By assigning a rank-1 formula to each of the generators, we obtain a rank-1 formula $\pwaxiom{\yon{\chi}}{\alpha_1,\ldots,\alpha_n}$ for each $\chi$.  
For example, the PDL axiom $\dynmod{\alpha\cup\beta}p \lra \dynmod{\alpha}p \lor\dynmod{\beta}p$ is of this kind.
Our completeness result will be restricted to positive operations.

\begin{definition}{Positive natural operations}
We call $\chi\colon \N^n \To \N$ a \emph{positive operation}
if $\yon{\chi}$ can be constructed using only $\land$ and $\lor$ in $\N(n \cdot \cPo(2))$.
If $\sig\colon T^n \To T$ and $\chi\colon \N^n \To \N$ are
such that $\transp{\plift}\circ\sigma = \chi\circ \transp{\plift}^n$,
then we call $\sigma$ positive if $\chi$ is positive.
The axioms for positive pointwise operations of the form $\yon{\chi} = \yon{\delta} \land \yon{\rho}$
are obtained by extending Definition~14 from \cite{HKL:CPDL-TCS-2014} with a case for conjunction:\\[.5mm]
\begin{minipage}{0.9\textwidth}
\[\pwaxiom{\yon{\delta} \land \yon{\rho}}{\alpha_1,\ldots,\alpha_n} = \pwaxiom{\yon{\delta}}{\alpha_1,\ldots,\alpha_n} \land\pwaxiom{\yon{\rho}}{\alpha_1,\ldots,\alpha_n}.\]
\end{minipage}
\end{definition}

\begin{exam}
Positive natural operations on $\Pow$ include union, but complement and intersection are not natural on $\Pow$.
Positive natural operations on $\Mon$ include union and intersection, but not the natural operation dual.
\end{exam}

\begin{definition}{Dynamic logic}
Let $\Log_\Diamond=(\{\Diamond\},\Ax,\emptyset,\Ru)$ be a modal logic over the basic modal language $\Fm(\{\Diamond\}, \atProps)$.
We define
$\Plift = \{ \dynmod{\alpha} \mid \alpha \in \Lbls\}$
and let $\Ax_\Lbls = \bigcup_{\alpha \in \Lbls}\Ax_\alpha$ where $\Ax_\alpha$ is the set of rank-1 axioms over the labelled modal language
$\Fm(\atProps,\atLbls,\Sig)$ obtained by substituting  $\dynmod{\alpha}$ for $\Diamond$ in all the axioms in $\Ax$.
We define $\Ru_\Lbls$ similarly as all labelled instances of rules in $\Ru$.
  
The \emph{$\theta$-dynamic logic} over $\Log_\Diamond$ is the modal logic $\Log=\Logdynseqstartest = (\Plift,\Ax', \Fr', \Ru')$
where\\
\begin{minipage}[b]{0.95\textwidth}
\[\begin{array}[b]{rcl}
\Ax' &=& \Ax_{\Lbls} 
  \cup \{ \dynmod{\syn{\sigma}(\alpha_1,\ldots,\alpha_n)}p \lra \pwaxiom{\yon{\chi}}{\alpha_1,\ldots,\alpha_n} \mid \syn{\sigma} \in \Sigma, \alpha_i \in \Lbls\}\\
  \Fr' &=& \{ \dynmod{\alpha;\beta}p \lra \dynmod{\alpha}\dynmod{\beta}p \mid \alpha,\beta \in \Lbls, p \in \atProps\} \cup\\
  && \{ \dynmod{\alpha^*}p \lra p \lor \dynmod{\alpha}\dynmod{\alpha^*}p \mid \alpha \in \Lbls\} \cup
  \\
  &&  \{ \dynmod{\psi?}p \lra(\psi \land p) \mid \psi \in \Fm(\atProps,\atLbls,\Sig)\}
  \\
\Ru' &=& \Ru_\Lbls \cup \left\{\text{\AxiomC{$\dynmod{\alpha} \psi \vee \phi \to \psi$}
	 \UnaryInfC{$\dynmod{\alpha^*}\phi \to \psi$}
	 \DisplayProof}
\mid \alpha \in \Lbls \right\}
\end{array}
\]
\end{minipage}
\end{definition}
\begin{proposition}
 	If $\Log_\Diamond$ is sound wrt to the $T$-coalgebraic semantics then the \emph{$\theta$-dynamic logic} $\Log$ is sound wrt to the class of  all   $\theta$-dynamic $\T$-models. In other words, for all 
	$\phi \in \Fm(\atProps,\atLbls,\Sig)$ and all $\theta$-dynamic $\T$-models $\str{M} = (X,\gamma_0,\plift,V)$
	we have 
	$$\vdash_\Log \phi \quad \mbox{ implies that } \quad \str{M} \; \mbox{ validates } \phi. $$ 
\end{proposition}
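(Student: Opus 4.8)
The proposal is to prove soundness by induction, separating the argument into two layers. The statement $\vdash_\Log \phi$ implies $\str{M}$ validates $\phi$ reduces, since derivability is the least set closed under the relevant closure conditions, to showing that (i) every axiom in $\Ax'$ and every frame condition in $\Fr'$ is validated in $\str{M}$, (ii) every rule in $\Ru'$ preserves validity, and (iii) validity is closed under propositional tautologies, modus ponens, and uniform substitution. Items (iii) are routine because $\truthset{\cdot}{\str{M}}$ interprets the Boolean connectives standardly, so I would dispatch them quickly and concentrate on (i) and (ii).

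\textbf{Axioms and frame conditions.} For the labelled base axioms $\Ax_\Lbls$, I would appeal to the soundness of $\Log_\Diamond$ with respect to the $T$-coalgebraic semantics: each $\dynmod{\alpha}$ is interpreted exactly as a coalgebraic modality via the predicate lifting $\plift$ and the map $\transp{\gamma}(\alpha)\colon X \to TX$, so every substitution instance obtained by replacing $\Diamond$ with $\dynmod{\alpha}$ is valid by the assumed one-step soundness, uniformly in $\alpha$. For the pointwise-operation axioms $\dynmod{\syn{\sigma}(\alpha_1,\ldots,\alpha_n)}p \lra \pwaxiom{\yon{\chi}}{\alpha_1,\ldots,\alpha_n}$, the key is the assumed coherence condition $\transp{\plift}\circ\sigma = \chi\circ \transp{\plift}^n$, which guarantees that $\sigma_X^X$ acts on truth sets exactly as the Yoneda-derived Boolean combination $\yon{\chi}$ prescribes; this was already established in \cite[section~4]{HKL:CPDL-TCS-2014}, and I would cite it. For the sequential-composition frame condition $\dynmod{\alpha;\beta}p \lra \dynmod{\alpha}\dynmod{\beta}p$, I would invoke Lemma~\ref{lem:comp} directly: since $\transp{\gamma}(\alpha;\beta) = \transp{\gamma}(\alpha)\kcirc\transp{\gamma}(\beta)$, the lemma (with $f = \transp{\gamma}(\alpha)$, $g = \transp{\gamma}(\beta)$, $U = \truthset{p}{\str{M}}$) gives precisely that the two truth sets coincide. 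The test axiom $\dynmod{\psi?}p \lra (\psi \land p)$ follows by unfolding $\transp{\gamma}(\psi?)$: at $x \in \truthset{\psi}{\str{M}}$ we have $\transp{\gamma}(\psi?)(x) = \eta_X(x)$, and since $\transp{\plift}$ is a monad morphism one computes $\eta_X(x) \in \plift_X(U) \iff x \in U$, whereas at $x \notin \truthset{\psi}{\str{M}}$ we get $\bot_{TX}$, which by diamond-likeness (the empty join) lies in no $\plift_X(U)$.

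\textbf{The iteration cases as the main obstacle.} The hardest part will be the star, handled in two pieces. For the fixpoint-unfolding frame condition $\dynmod{\alpha^*}p \lra p \lor \dynmod{\alpha}\dynmod{\alpha^*}p$, I would use the definition $\transp{\gamma}(\alpha)^* = \bigvee_{n<\omega}\klIter{\transp{\gamma}(\alpha)}{n}$ together with diamond-likeness: membership of the supremum in $\plift_X(U)$ is equivalent to membership of some finite iterate, and splitting off the $n=0$ summand $\eta_X$ (giving the disjunct $p$) from the $n+1$ summands $\transp{\gamma}(\alpha)\kcirc\klIter{\transp{\gamma}(\alpha)}{n}$ (giving, via Lemma~\ref{lem:comp} and left-distributivity, the disjunct $\dynmod{\alpha}\dynmod{\alpha^*}p$) yields the equivalence. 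For the induction rule, assume $\str{M}$ validates $\dynmod{\alpha}\psi \vee \phi \to \psi$; I must show it validates $\dynmod{\alpha^*}\phi \to \psi$. Fixing $x$ with $x \in \truthset{\dynmod{\alpha^*}\phi}{\str{M}}$, diamond-likeness again reduces this to $\klIter{\transp{\gamma}(\alpha)}{n}(x) \in \plift_X(\truthset{\phi}{\str{M}})$ for some $n$, and I would then prove by induction on $n$ that each such point satisfies $\psi$, the inductive step consuming one application of Lemma~\ref{lem:comp} and one application of the validated premise. The delicate points to get right here are that diamond-likeness is exactly what licenses passing the modality through the infinitary join $\bigvee_{n<\omega}$, and that monotonicity of $\plift$ is what makes the inner iterate-truth-sets behave well under the induction; these are precisely the structural hypotheses fixed at the start of the subsection, so the argument closes.
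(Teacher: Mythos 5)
Your proposal is correct and follows essentially the same route as the paper: the paper likewise handles the non-iteration axioms by appeal to the earlier soundness results of \cite{HKL:CPDL-TCS-2014}, notes the star unfolding axiom is routine, and proves the star rule exactly as you do --- diamond-likeness extracts a finite iterate $\klIter{\transp{\gamma}(\alpha)}{j}(x) \in \plift_X(\sem{\phi})$, and an induction on $j$ using Lemma~\ref{lem:comp}, monotonicity of $\plift$, and the validated premise closes the argument. The only difference is that you spell out the cases the paper delegates or leaves to the reader (tests, sequential composition, the unfolding axiom), and your proof is correct there as well.
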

\begin{proof}
In \cite{HKL:CPDL-TCS-2014}, we showed soundness of the axioms for pointwise operations, sequential composition and tests with respect to $\theta$-dynamic $\T$-models (without iteration). 
Soundness of the star axiom is not difficult to check. Soundness of the star rule can be proven as follows:
Suppose $\str{M} = (X,\gamma,\plift,V)$ is a $\theta$-dynamic $T$-model such that $\str{M}$ validates the formula 
$\dynmod{\alpha}\psi \vee \phi \to \psi$. 
For any state $x \in X$ such that  $x \models \dynmod{\alpha^*}\phi$
we have --- by standardness of $\gamma$ --- that $\transp{\gamma}(\alpha)^*(x) \in \plift_X(\sem{\phi})$.
This implies $\bigvee_j \klIter{\transp{\gamma}(\alpha)}{j} (x) \in \plift_X(\sem{\phi})$
and, by diamond-likeness of $\plift$, there is a $j \geq 0$ such that $\klIter{\transp{\gamma}(\alpha)}{j} (x) \in \plift_X(\sem{\phi})$.
Therefore, to show that 
$\str{M}$ validates $\dynmod{\alpha^*}\phi \to \psi$, it suffices to show that
for all $j \geq 0$ we have $U_j \subseteq \sem{\psi}$ where
\[ U_j = \{x \in X \mid \klIter{\transp{\gamma}(\alpha)}{j} (x) \in \plift_X(\sem{\phi}) \}.\]
We prove this by induction. For $j = 0$ the claim holds trivially as by assumption
the premiss of the star rule is valid and thus $\sem{\phi} \subseteq \sem{\psi}$.
Consider now some $j = i +1$. Then we have
\begin{eqnarray*}
	U_{i+1} & = &  \{x \in X \mid \klIter{\transp{\gamma}(\alpha)}{i+1} (x) \in \plift_X(\sem{\phi}) \} \\
		& = & \{ x \in X \mid  \transp{\gamma}(\alpha) \kcirc\klIter{\transp{\gamma}(\alpha)}{i} (x) \in \plift_X(\sem{\phi})\} \\
		& \stackrel{\mbox{\tiny Lemma~\ref{lem:comp}}}{=} & \{ x \in X \mid  \transp{\gamma}(\alpha)(x) \in \plift_X(U_i) \} \\
		& \stackrel{\mbox{\tiny I.H.}}{\subseteq} & \{ x \in X \mid \transp{\gamma}(\alpha)(x) \in \plift_X(\sem{\psi}) \} \\
		& = & \sem{\dynmod{\alpha}\psi} \subseteq \sem{\psi} \qquad \mbox{\small (last inclusion holds by validity of rule premiss)} 
		\end{eqnarray*}
\end{proof}
%
%
%
%
%
%
%
%


\section{Weak Completeness}
In this section, we will show that if the base logic $\Log_\Diamond$
is one-step complete with respect to the $T$-coalgebraic semantics given by $\plift$, and $\theta$ consists of positive operations, then the dynamic logic $\Log=\Logdynseqstartest$ is (weakly) complete with respect to the class of all $\theta$-dynamic $\T$-models, i.e., every $\Log$-consistent formula is satisfiable in a $\theta$-dynamic $\T$-model.
As in the completeness proof for PDL, a satisfying model for a formula $\psi$ will essentially be obtained from a filtration of the canonical model through a suitable closure of $\{\psi\}$.

A set $\Phi  \sse \Fm(\atProps, \atLbls, \Sig)$ of dynamic formulas is \emph{(Fischer-Ladner) closed} if it is closed under subformulas, closed under single negation, that is,
if $\phi = \lnot\psi \in \Phi$ then $\psi \in \Phi$, and
if $\phi \in \Phi$ is not a negation, then $\lnot\phi\in\Phi$, and satisfies the following closure conditions:
\begin{enumerate}
\item If $\dynmod{\alpha;\beta}\phi \in \Phi$ then $\dynmod{\alpha}\dynmod{\beta}\phi \in \Phi$.
\item For all 1-step axioms $\dynmod{\syn{\sigma}(\alpha_1,\ldots,\alpha_n)}p \lra \pwaxiom{\yon{\chi}}{\alpha_1,\ldots,\alpha_n}$, if  $\dynmod{\syn{\sig}(\alpha_1,\ldots,\alpha_n)}\psi \in \Phi$ then also  $\pwaxiomm{\yon{\chi}}{\alpha_1,\ldots,\alpha_n}{\psi} \in \Phi$.
\item If $\dynmod{\psi?}\phi \in \Phi$ then $\psi \wedge \phi \in \Phi$.
\item If $\dynmod{\alpha^*}\phi \in \Phi$ then $\dynmod{\alpha}\dynmod{\alpha^*}\phi$ and $\dynmod{\alpha}\phi \in \Phi$.
\end{enumerate}

Given a dynamic formula $\psi$, we denote by $\nFL{\psi}$ the least set of formulas that is closed and contains $\psi$.
A standard argument shows that $\nFL{\psi}$ is finite. 

From now on we fix a finite, closed set $\ClosedPhi$ (which may be thought of as $\nFL{\psi}$ for some $\psi$). An \emph{$\Log$-atom over $\ClosedPhi$} is a maximally $\Log$-consistent subset of $\ClosedPhi$, and we denote by $S$ the set of all $\Log$-atoms over $\ClosedPhi$.
For $\phi \in \Fm(\atProps, \atLbls, \Sig)$ we put 
$\hat\phi = \{ \Delta \in S \mid \phi \in \Delta \}$. 

\noindent Note that, in particular, for each $\phi \not\in \Phi$ we have $\hat\phi = \emptyset$.
A maximally $\Log$-consistent set (MCS) 
$\Xi$ is a maximally $\Log$-consistent subset of $\Fm(\atProps, \atLbls, \Sig)$. Clearly, for each MCS
$\Xi$ we have $\Xi \cap \ClosedPhi$ is an $\Log$-atom.
Any subset of $S$ can be characterised by a propositional combination of formulas in $ \ClosedPhi$. It will be useful to have a notation for these characteristic formulas at hand.
 \begin{definition}{Characteristic formula}
   For $U \subseteq S$, we define
	the characteristic formula $\charf{U}$ of $U$ by
	\[ \charf{U} = \bigvee_{\Delta \in U} \bigwedge{\Delta}  \]
	where for any $\Delta \in S$, $\bigwedge{\Delta}$ is the conjunction of the elements of $\Delta$.
 \end{definition}

We will use the following fact that allows to lift one-step completeness of the base logic to $\Log$.

\begin{lemma}\label{lem:onestepcomp}
   If  $\Log_\Diamond$ is one-step complete for $T$ then $\Log$ is one-step complete for $T^\Lbls$.
\end{lemma}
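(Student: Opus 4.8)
The plan is to reduce one-step completeness of $\Log$ over $T^\Lbls$ to one-step completeness of $\Log_\Diamond$ over $T$, applied independently in each action-coordinate. The structural observation driving this is that $T^\Lbls X = (TX)^\Lbls$ is a product of copies of $TX$ indexed by $\Lbls$, and that over $T^\Lbls$ the modality $\dynmod{\alpha}$ is interpreted by the predicate lifting $U \mapsto \{ f \in (TX)^\Lbls \mid f(\alpha)\in\plift_X(U)\}$, which inspects only the coordinate $f(\alpha)$. Consequently the one-step value $\onestepsem{\varphi}$ of any $\varphi\in\Prop(\Plift(\Pow X))$ is completely determined by the induced truth values of the modal atoms $\dynmod{\alpha}U$, each of which depends on a single coordinate of $f$.

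Concretely, fix a finite set $X$ and a one-step $\Log$-consistent set $\Phi\subseteq\Prop(\Plift(\Pow X))$. Unfolding $\oneDer$, one-step consistency of $\Phi$ says exactly that every finite subset of $\Phi$ is propositionally satisfiable together with the set of all substitution instances over $X$ of the axioms in $\Ax'$. By propositional compactness I fix a single valuation $v$ of the atoms $\{\dynmod{\alpha}U \mid \alpha\in\Lbls,\, U\subseteq X\}$ that makes every formula of $\Phi$ and every such axiom instance true.

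I then realise each coordinate separately. Since $X$ is finite, for a fixed $\alpha$ there are only finitely many atoms $\dynmod{\alpha}U$; restricting $v$ to them and relabelling $\dynmod{\alpha}$ to $\Diamond$ yields a complete assignment over $\{\Diamond U \mid U\subseteq X\}$. As $v$ validates the instances of $\Ax_\alpha$, and these are precisely the $\dynmod{\alpha}$-relabellings of the instances of $\Ax$, this assignment is a propositional model of the $\Ax$-instances, so the corresponding literal set is one-step $\Log_\Diamond$-consistent. One-step completeness of $\Log_\Diamond$ for $T$ then supplies some $t_\alpha\in TX$ with $t_\alpha\in\plift_X(U)$ iff $v(\dynmod{\alpha}U)=\top$, for all $U\subseteq X$. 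Defining $f\in(TX)^\Lbls=T^\Lbls X$ by $f(\alpha)=t_\alpha$, the atom-valuation induced by $f$ agrees with $v$ on every atom, whence $f\in\onestepsem{\varphi}$ for every $\varphi\in\Phi$, and therefore $\onestepsem{\Phi}\neq\emptyset$.

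The step that needs the most care is the role of $\Ax'$. Besides the per-action copies $\Ax_\Lbls$ of the base axioms, $\Ax'$ also contains the pointwise-operation axioms, and these are the only rank-1 axioms linking modalities of different actions (the composition, star and test schemes live in $\Fr'$, and the star rule in $\Ru'$, none of which enter $\oneDer$). Such cross-action axioms are not one-step valid over the unrestricted product $T^\Lbls$, so realising $v$ coordinate by coordinate will in general not yield an $f$ validating them. What makes the argument go through is that $f$ is never required to validate the axioms, only to satisfy $\Phi$: because every modal atom is evaluated in a single coordinate, it is enough that the atom-valuation induced by $f$ matches $v$, and each coordinate may be chosen independently by appealing to the base logic. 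The cross-action axioms therefore do their work solely in guaranteeing the existence of the global valuation $v$.
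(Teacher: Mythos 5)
Your proof is correct, and it is essentially the argument the paper intends: the paper only sketches this lemma by deferring to its earlier report, but the coordinate-wise reduction you give --- use one-step consistency plus propositional compactness to fix a single valuation of the atoms $\dynmod{\alpha}U$ respecting all rank-1 axiom instances (including the cross-action pointwise ones), then realise each coordinate independently via one-step completeness of $\Log_\Diamond$ for $T$ --- is the standard route, and your observation that only the rank-1 axioms enter $\oneDer$ (while the $;$, $^*$ and $?$ schemes live in $\Fr'$ and $\Ru'$) matches the paper's own remark. Your closing paragraph correctly isolates the one delicate point, namely that the witness $f\in(TX)^\Lbls$ need only agree with the valuation on modal atoms and need not validate the (one-step invalid) pointwise axioms over the unrestricted product.
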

The proof of this lemma is  analogous to the proof of the corresponding statement in~\cite{HKL:cpdl-report}. The main difference being that instead of arguing via MCSs one has to use atoms.
Note that only the axioms for pointwise operations have influence on one-step properties, as the ones for $;$ and $^*$ are not rank-1.

\subsection{Strongly coherent models}

As in the finitary completeness proof of PDL \cite{KozenParikh81:PDL} and the finite model construction in \cite{Schr07}, we need a coalgebra structure on the set $S$ of all $\Log$-atoms over $\ClosedPhi$ that satisfies a certain coherence condition which ensures that a truth lemma can be proved. 

\begin{definition}{Coherent structure}
\label{def:coherent-model}
A coalgebra $\gamma\colon S \to (TS)^\Lbls$ is \emph{coherent}
if for all $\Gamma \in S$ and all $\dynmod{\alpha}\phi \in \ClosedPhi$,
$\qquad \transp{\gamma}(\alpha)(\Gamma) \in \plift_S(\hat\phi) \quad\text{ iff }\quad
   \dynmod{\alpha}\phi \in \Gamma$.
\end{definition}

\begin{lemma}[Truth lemma]\label{lem:truth}
	Let $\gamma\colon S \to (TS)^\Lbls$ be a coherent structure map and define a valuation 	$V: \AtProps \to \Pow(S)$ 
	for propositional variables
	$p \in \AtProps$ 
         by putting {$V(p) = \hat{p}$}. 
	For each $\Gamma \in S$ and $\phi \in \ClosedPhi$ we have
	\[ (S,\gamma,V), \Gamma \models \phi \qquad \mbox{iff} \qquad \phi \in \Gamma .\]
\end{lemma}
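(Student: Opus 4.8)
The plan is to prove the biconditional $(S,\gamma,V),\Gamma\models\phi \iff \phi\in\Gamma$ by induction on the structure of $\phi$, ranging over the closed set $\ClosedPhi$. The atomic case is immediate from $V(p)=\hat p$, and the propositional cases ($\bot$, $\lnot$, $\land$) go through routinely because $\Gamma$ is a maximally $\Log$-consistent subset of $\ClosedPhi$ and $\ClosedPhi$ is closed under subformulas and single negation, so membership in $\Gamma$ commutes with the Boolean connectives exactly as in the standard filtration argument for PDL.

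The interesting case is $\phi=\dynmod{\alpha}\psi$. Here I would first note that by the induction hypothesis $\truthset{\psi}{\str{M}}=\hat\psi$, so that $(S,\gamma,V),\Gamma\models\dynmod{\alpha}\psi$ unfolds, via Definition~\ref{def:dyn-sem}, to $\transp{\gamma}(\alpha)(\Gamma)\in\plift_S(\truthset{\psi}{\str{M}})=\plift_S(\hat\psi)$. Since $\gamma$ is \emph{coherent} (Definition~\ref{def:coherent-model}) and $\dynmod{\alpha}\psi\in\ClosedPhi$, the coherence condition gives precisely
\[
\transp{\gamma}(\alpha)(\Gamma)\in\plift_S(\hat\psi)\iff\dynmod{\alpha}\psi\in\Gamma,
\]
which closes the modal case. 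The whole point of the coherence condition is thus to handle the modal step directly, regardless of the internal structure of the complex action $\alpha$.

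The one subtlety I would flag is that the induction must be on a well-founded measure that decreases from $\dynmod{\alpha}\psi$ to $\psi$, and crucially that $\psi$ again lies in $\ClosedPhi$ so that the induction hypothesis applies to it: this is guaranteed because $\ClosedPhi$ is closed under subformulas, so $\dynmod{\alpha}\psi\in\ClosedPhi$ forces $\psi\in\ClosedPhi$. I would emphasise that, unlike in a Fischer--Ladner-style argument where one must recursively decompose $\alpha$ (using the closure conditions for $;$, $\sigma$, $?$ and $^*$) and invoke the corresponding frame axioms, here all of that work has been pushed into establishing coherence of $\gamma$; the Truth Lemma itself only needs that single coherence equivalence together with the ordinary subformula induction.

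The main obstacle, therefore, is not in this lemma but in the (still-to-come) construction of a coherent structure map $\gamma\colon S\to(TS)^\Lbls$ on the set of atoms, which is where one-step completeness (Lemma~\ref{lem:onestepcomp}), left-quantalicity, diamond-likeness of $\plift$, and the Kleisli-composition identity of Lemma~\ref{lem:comp} will all be brought to bear. For the Truth Lemma as stated, the proof is a clean structural induction whose modal case is \emph{exactly} the coherence hypothesis, so I expect it to be short.
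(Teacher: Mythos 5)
Your proposal is correct and follows exactly the route the paper takes: the paper's own proof is a one-line sketch stating that the lemma follows by structural induction on $\phi$, with the base case given by the definition of the valuation and the modal case given directly by the coherence condition. Your additional observations (that $\ClosedPhi$ being subformula-closed keeps the induction within $\ClosedPhi$, and that all the real work is deferred to constructing a coherent $\gamma$) accurately reflect how the paper organises the argument.
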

The lemma follows from a standard induction argument on the structure of the formula $\phi$ - the base
case is a immediate consequence of the definition of the valuation, the induction step for the modal operators
follows from coherence. 

In order to prove coherence for iteration programs $\alpha^*$, we need the following stronger form of coherence, which is inspired by the completeness proof of dual-free Game Logic in \cite{Parikh85}.
  
\begin{definition}{Strongly coherent structure}
  We say that $\gamma\colon S \to (TS)^\Lbls$ is \emph{strongly coherent for $\alpha \in \Lbls$}
  if
  for all $\Gamma \in S$ and all $U \sse S$:\quad
  $\transp{\gamma}(\alpha)(\Gamma) \in \plift_S(U)$ \quad iff \quad
  $\dynmod{\alpha}\charf{U} \land \Gamma$ is $\Log$-consistent.
\end{definition}

In the remainder of this subsection, we prove the following existence result.

\begin{proposition}
  \label{prop:existence}
  If $\Log_\Diamond$ is one-step complete for $T$, then there exists a
  $\gamma\colon S \to (TS)^\Lbls$ which is strongly coherent for all $\alpha \in \Lbls$.
\end{proposition}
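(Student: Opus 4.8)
The plan is to realise $\gamma$ on the finite set $S$ of $\Log$-atoms: fix its action on atomic programs by one-step completeness, generate the rest as a standard coalgebra, and then establish strong coherence for every $\alpha\in\Lbls$ by induction on the structure of $\alpha$, treating the two implications of the biconditional separately. For the atomic data I would, for each $\Gamma\in S$ and $a\in\atLbls$, invoke one-step completeness (Lemma~\ref{lem:onestepcomp}) to choose $\transp{\gamma}_0(a)(\Gamma)\in TS$ realising the one-step type whose positive part is $\{U\subseteq S\mid \dynmod{a}\charf{U}\wedge\bigwedge\Gamma \text{ is }\Log\text{-consistent}\}$; the requisite one-step consistency of this type is the familiar existence-lemma computation from the $\Log$-consistency of $\Gamma$ and the rank-1 axioms. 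Extending $\transp{\gamma}_0$ by the clauses of Definition~\ref{def:dyn-sem} yields a standard $\gamma$ which, by construction, is strongly coherent for atomic programs.

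I would first prove the direction ``$\dynmod{\alpha}\charf{U}\wedge\Gamma$ consistent $\Rightarrow \transp{\gamma}(\alpha)(\Gamma)\in\plift_S(U)$'', since it is this direction that propagates cleanly through the constructors using only the same direction of the induction hypothesis. For $\alpha;\beta$ the hypothesis for $\beta$ gives (contrapositively) $\Der \dynmod{\beta}\charf{U}\to\charf{V}$ for $V=\transp{\gamma}(\beta)^{-1}(\plift_S(U))$; the sequencing frame condition and monotonicity of $\dynmod{\alpha}$ (which holds since $\plift$ is monotonic) then reduce the claim to strong coherence of $\alpha$ at $V$ via Lemma~\ref{lem:comp}. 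The pointwise case $\syn{\sigma}(\alpha_1,\dots,\alpha_n)$ uses its rank-1 axiom together with positivity of $\chi$ and diamond-likeness, which let consistency be pushed through the $\wedge/\vee$-combination; and the test case follows from the test axiom using the truth lemma for the structurally smaller formula $\psi$ (so this is in fact a simultaneous induction on Fischer--Ladner rank together with Lemma~\ref{lem:truth}).

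The iteration case is the main obstacle, and it is here that the left-quantalic structure is decisive. Writing $f=\transp{\gamma}(\alpha)$ and $W=\{\Gamma\mid \transp{\gamma}(\alpha^*)(\Gamma)\in\plift_S(U)\}$, I would use left-distributivity of Kleisli composition over joins to unfold the Kleisli star as $f^*=\eta_S\vee(f\kcirc f^*)$; evaluating at $\Gamma$, applying diamond-likeness to this binary join, using $\eta_S(\Gamma)\in\plift_S(U)\Leftrightarrow\Gamma\in U$, and applying Lemma~\ref{lem:comp} to the composite yields the fixpoint identity
\[ W = U\cup R(W), \qquad R(X):=\{\Gamma\mid \transp{\gamma}(\alpha)(\Gamma)\in\plift_S(X)\}. \]
Hence $U\subseteq W$ and $R(W)\subseteq W$, so the premise $\dynmod{\alpha}\charf{W}\vee\charf{U}\to\charf{W}$ of the star rule is $\Log$-derivable: $\charf{U}\to\charf{W}$ holds because $U\subseteq W$, while $\dynmod{\alpha}\charf{W}\to\charf{W}$ holds because every atom consistent with $\dynmod{\alpha}\charf{W}$ lies, by the induction hypothesis for $\alpha$, in $R(W)\subseteq W$. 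Applying the star rule gives $\Der\dynmod{\alpha^*}\charf{U}\to\charf{W}$, whence any $\Gamma$ with $\dynmod{\alpha^*}\charf{U}\wedge\Gamma$ consistent satisfies $\Gamma\in W$, as required. I expect the crux to be precisely this observation---that left-quantalicity turns the set of ``$\alpha^*$-reachable'' atoms into a genuine fixpoint of $X\mapsto U\cup R(X)$, which is exactly what makes the induction rule applicable.

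For the converse direction ``$\transp{\gamma}(\alpha)(\Gamma)\in\plift_S(U)\Rightarrow\dynmod{\alpha}\charf{U}\wedge\Gamma$ consistent'' I would argue globally rather than by a second induction, to sidestep the fact that this implication does not propagate through composition on its own. The direction already proven gives, at definable sets $U=\hat{\phi}$ with $\dynmod{\alpha}\phi\in\ClosedPhi$, both halves of ordinary coherence (Definition~\ref{def:coherent-model}): the non-trivial half follows by contraposition, as $\dynmod{\alpha}\phi\notin\Gamma$ forces $\neg\dynmod{\alpha}\phi\in\Gamma$, making $\dynmod{\alpha}\charf{\hat\phi}\wedge\Gamma$ inconsistent and hence $\transp{\gamma}(\alpha)(\Gamma)\notin\plift_S(\hat\phi)$. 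Thus $\gamma$ is coherent, Lemma~\ref{lem:truth} yields $\sem{\charf{U}}=U$ and $\Gamma\in\sem{\bigwedge\Gamma}$, and now soundness of $\Log$ on the standard model $(S,\gamma,\plift,V)$ finishes the job: were $\transp{\gamma}(\alpha)(\Gamma)\in\plift_S(U)$ yet $\dynmod{\alpha}\charf{U}\wedge\Gamma$ inconsistent, then $\Der\bigwedge\Gamma\to\neg\dynmod{\alpha}\charf{U}$ would force $\Gamma\not\models\dynmod{\alpha}\charf{U}$, i.e.\ $\transp{\gamma}(\alpha)(\Gamma)\notin\plift_S(U)$, a contradiction. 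Together with the first direction this gives strong coherence for all $\alpha\in\Lbls$.
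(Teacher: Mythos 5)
Your architecture differs from the paper's in a way that creates two genuine problems. The paper does \emph{not} build a standard coalgebra at this stage: it treats every $\alpha\in\Lbls$ as an opaque index for the functor $T^\Lbls$ and manufactures, for each $\Gamma$ and each $\alpha$ \emph{independently}, an element $t_{\alpha,\Gamma}\in TS$ realising the strong-coherence type (Lemma~\ref{lem:existence}); no induction on $\alpha$ is needed, and standardness is deliberately postponed to the next subsection. The first gap in your proposal is in the base case, which is where the entire content of the proposition actually lives. You cannot realise the type $\{U\sse S\mid \dynmod{a}\charf{U}\wedge\Gamma \mbox{ consistent}\}$ by a single appeal to one-step completeness: the sets $U$ in this collection are only \emph{individually} consistent with $\Gamma$, and nothing guarantees that all of these positive requirements, together with all the negative ones, assemble into one one-step consistent set. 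So the ``familiar existence-lemma computation'', which only realises the finitely many modal formulas actually contained in $\Gamma$, does not apply. The paper instead invokes one-step completeness once for the derivable part of the type and once more for \emph{each} consistent-but-underivable $U$ (each time paired with all derivably negative constraints), and then takes the \emph{join} of all these witnesses in $TS$; diamond-likeness is exactly what guarantees that the join satisfies the positives of every joinand and none of the shared negatives. This join-of-witnesses step, which is where left-quantalicity and diamond-likeness earn their keep in this proposition, is missing from your proposal.

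The second gap is a circularity in your converse direction. To reach the truth lemma you need both halves of ordinary coherence (Definition~\ref{def:coherent-model}), and you claim that the half ``$\dynmod{\alpha}\phi\notin\Gamma$ implies $\transp{\gamma}(\alpha)(\Gamma)\notin\plift_S(\hat\phi)$'' follows ``by contraposition'' from the direction you have proved. It does not: the contrapositive of ``consistent $\Rightarrow$ in the lifting'' is ``not in the lifting $\Rightarrow$ inconsistent'', whereas you need ``inconsistent $\Rightarrow$ not in the lifting''. Since $\plift_S(\hat\phi)$ and $\plift_S(S\setminus\hat\phi)$ are in general not complementary, no manipulation of the first direction yields this. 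Consequently ordinary coherence is not established, Lemma~\ref{lem:truth} is not available, and your soundness argument for the converse half of strong coherence never gets off the ground. The paper needs a separate induction on $\alpha$ for precisely this implication (Lemma~\ref{lem:coh2}), and even then obtains it only for sets of the form $\hat\phi$ with $\dynmod{\alpha}\phi\in\ClosedPhi$ --- which suffices for the truth lemma but is weaker than the full strong coherence you claim for the standard structure. Your treatment of composition and of the star case in the first direction is essentially sound and close in spirit to Lemmas~\ref{lem:IH1} and~\ref{lem:coh1}, but as a proof of Proposition~\ref{prop:existence} the proposal does not go through.
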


Let $\toFL{(-)} \colon \Prop(\Plift(\Pow(S))) \to \Prop(\Plift(\Prop(\ClosedPhi)))$ be the substitution map induced by taking $\toFL{U} = \charf{U}$ for all $U \in \Pow(S)$.
Conversely, let $\toPS{(-)}\colon \Prop(\Plift(\Prop(\ClosedPhi))) \to \Prop(\Plift(\Pow(S)))$ be the substitution map induced by taking
$\toPS{\top} = S$ and 
for all $\psi \in \Prop(\ClosedPhi)$,
$\toPS{\psi} = \{ \Delta \in S \mid \Delta \plDer \psi\}$.

\begin{lemma}[Derivability]
  \label{lem:derivability}
For all $\phi \in \Prop(\Plift(\Prop(\ClosedPhi)))$,
\benum
\item $\oneDer \toPS{\phi}\quad$ implies $\quad\Der \toFL{(\toPS{\phi})}$.
\item $\Der \toFL{(\toPS{\phi})} \lra \phi$.
\eenum  
\end{lemma}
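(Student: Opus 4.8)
The plan is to reduce both parts to two facts about the finite set $S$ of $\Log$-atoms over $\ClosedPhi$. The first is \emph{exhaustiveness}: $\Der \charf{S}$, i.e.\ $\Der \bigvee_{\Delta \in S}\bigwedge\Delta$. This holds because the disjunction of all maximal propositionally-consistent selections from $\ClosedPhi$ is a tautology, while each such selection that is not $\Log$-consistent (hence not an atom) is $\Log$-refutable and may be dropped without affecting $\Log$-derivability. The second is that, modulo $\Der$, the assignment $U \mapsto \charf{U}$ is a \emph{Boolean homomorphism} from $\Pow(S)$ into the Lindenbaum algebra of $\Log$: since distinct atoms are mutually exclusive (the conjunction $\bigwedge\Delta \land \bigwedge\Delta'$ is refutable for $\Delta \neq \Delta'$), one gets $\plDer \charf{U\cap V}\lra \charf{U}\land\charf{V}$ and $\plDer \charf{U\cup V}\lra \charf{U}\lor\charf{V}$ directly, while $\Der \charf{S\setminus U}\lra\lnot\charf{U}$ and $\Der \charf{S}\lra\top$ additionally invoke exhaustiveness.

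For part (1), I would unfold the definition of $\oneDer$: assuming $\oneDer \toPS{\phi}$, propositional compactness gives finitely many axiom instances $\psi_1\tau_1,\dots,\psi_k\tau_k$, with each $\psi_j$ a (rank-1) axiom from $\Ax'$ and $\tau_j\colon \atProps \to \Pow(S)$, that propositionally entail $\toPS{\phi}$. Applying the substitution $\toFL{(-)}$, which replaces every set argument $U$ of a modality by $\charf{U}$ and commutes with all Boolean connectives, preserves propositional entailment, so $\toFL{(\toPS{\phi})}$ is propositionally entailed by the formulas $\toFL{(\psi_j\tau_j)}$. Using the homomorphism fact to rewrite the already-evaluated inner set-operations back into Boolean connectives, each $\toFL{(\psi_j\tau_j)}$ is $\Der$-equivalent to the substitution instance of the axiom $\psi_j$ obtained by substituting $\charf{\tau_j(p)}$ for every atomic proposition $p$; closure of $\Log$ under uniform substitution then yields $\Der \toFL{(\psi_j\tau_j)}$. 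Since $\Der$ contains the propositional tautologies and is closed under modus ponens, propositional entailment from $\Log$-derivable premises gives $\Der \toFL{(\toPS{\phi})}$.

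For part (2), I would first peel off the Boolean structure. The round trip $\toFL{}\circ\toPS{}$ sends each modal atom $\dynmod{\alpha}\psi$ (with $\psi \in \Prop(\ClosedPhi)$) to $\dynmod{\alpha}\charf{\toPS{\psi}}$ and commutes with the outer connectives, so by a routine induction on the Boolean structure of $\phi$ (using that $\Der$ is closed under propositional reasoning) it suffices to prove $\Der \dynmod{\alpha}\charf{\toPS{\psi}}\lra\dynmod{\alpha}\psi$ for each such atom. By the congruence rule, whose labelled instances lie in $\Ru'$ for every $\dynmod{\alpha}$, this follows from the atom-expansion identity $\Der \psi \lra \charf{\toPS{\psi}}$, i.e.\ $\Der \psi \lra \bigvee_{\Delta \plDer \psi}\bigwedge\Delta$. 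The implication $\Der \charf{\toPS{\psi}}\to\psi$ is immediate because $\plDer \bigwedge\Delta \to \psi$ for every disjunct, by definition of $\toPS{}$; for the converse I would use exhaustiveness to pass to $\Der \psi \to \bigvee_{\Delta \in S}(\psi \land \bigwedge\Delta)$ and then note that every atom decides $\psi$ (by single-negation closure and maximal $\Log$-consistency), so the disjuncts with $\Delta \not\plDer\psi$ are refutable and only those with $\Delta \plDer \psi$ remain.

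The Boolean-algebra bookkeeping is routine; the one genuine subtlety I anticipate is the reconciliation in part (1) between $\toFL{}$ applied to an axiom instance — where the inner propositional structure has already been collapsed into a single subset of $S$ — and the literal uniform-substitution instance of the same axiom, in which that structure survives as formal connectives. Bridging the two is exactly what the homomorphism fact supplies, and the sole point where plain propositional reasoning does not suffice, forcing the step up from $\plDer$ to $\Der$, is the treatment of complement and verum, which rests on atom exhaustiveness $\Der \charf{S}$.
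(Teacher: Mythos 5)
Your proposal is correct and follows essentially the same route as the paper: part (1) by pushing propositional entailment through the substitution $\toFL{(-)}$ and recognising the images of one-step axiom instances as substitution instances of axioms, and part (2) by reducing, via congruence and propositional reasoning, to the key equivalence $\Der \psi \lra \charf{\toPS{\psi}}$ for $\psi \in \Prop(\ClosedPhi)$. Two local differences are worth noting, both to your credit. In part (1) the paper simply asserts that $\toFL{(\chi\sigma)}$ ``can be written as'' $\chi\tau$ with $\tau(p)=\charf{\sigma(p)}$, whereas these formulas differ syntactically (the inner Boolean structure of a rank-1 axiom is already collapsed into a single subset of $S$ in $\chi\sigma$); your Boolean-homomorphism fact about $U\mapsto\charf{U}$, pushed under the modalities by the congruence rule, is exactly the bridge the paper leaves implicit. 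In part (2) you prove the hard direction $\Der\psi\to\charf{\toPS{\psi}}$ directly from atom exhaustiveness $\Der\charf{S}$ and the fact that atoms decide every $\psi\in\Prop(\ClosedPhi)$, while the paper argues by contradiction via a maximally consistent set extending $\psi\land\lnot\toFL{(\toPS{\psi})}$; the two arguments rest on the same underlying facts (a Lindenbaum-style extension and the decision property, which the paper establishes by the induction you only gesture at), so this is a cosmetic rather than substantive divergence.
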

\begin{proof}
\ul{Claim 1:} For all $\psi \in \Prop(\Plift(\Pow(S)))$,
  $\oneDer \psi$ implies that $\Der\toFL{\psi}$.\\
  It is clear that Item~1 follows from Claim 1 - let us now prove Claim~1:
  Suppose that $\oneDer \psi$, ie., assume that $\psi$ is one-step $\Log$-derivable.
  By the definition of one-step derivability, this means that
   the set $\{ \chi \sigma \mid \chi \in \Ax, \sigma: P \to \Pow(S)\}$ 
  propositionally entails $\psi$. This implies that $\toFL{\psi}$ is a propositional
  consequence of the set $W=\{ \toFL{\chi \sigma} \mid \chi \in \Ax,\sigma: P \to \Pow(S)\}$.
  Any formula $\toFL{\chi\sigma} \in W$ can be written as 
  $\chi \tau$ with $\tau: P \to \Prop(\ClosedPhi)$ defined as  
  $\tau(p) = \charf{\sigma(p)}$ -  in other words, all elements of $W$ 
  are substitution instances of $\Log$-axioms, $\toFL{\psi}$ is a propositional consequence of $W$ and hence,
  as $\Log$ is closed under propositional reasoning and uniform substitution, we get
  $\Der \toFL{\psi}$ as required.
  
  It remains to prove item 2.
  We prove that for all $\phi \in \Prop(\ClosedPhi)$,
  \beq\label{eq:derivability} 
  \Der \phi \lra \toFL{(\toPS{\phi})}
  \eeq
  Item 2 then follows by applying the congruence rule and propositional
  logic.
  For \eqref{eq:derivability},
  it is easy to see that for all $\phi \in \Prop(\ClosedPhi)$,
  $\plDer  \toFL{(\toPS{\phi})} \ra \phi$ and hence
  $\Der \toFL{(\toPS{\phi})} \ra \phi$.
  For the other implication, suppose towards a contradiction that
  $\phi\land\lnot\toFL{(\toPS{\phi})}$ is $\Log$-consistent.
  Then there is a maximally $\Log$-consistent set $\Xi$ such that
  $\phi, \lnot\toFL{(\toPS{\phi})}\in \Xi$.
  Take $\Delta := \Xi \cap \ClosedPhi$.
  We have 
  \beq\label{eq:atom-entailment}
  \text{for all }\psi \in \Prop(\ClosedPhi): \quad \Delta \plDer\psi \quad\text{ or }\quad \Delta \plDer \lnot\psi
  \eeq
  The proof is by induction on $\psi$.
  The base case where $\psi \in \ClosedPhi$ is trivial.
  If $\psi = \lnot\psi'$, then by I.H.
  $\Delta \plDer\psi' \text{ or } \Delta \plDer \lnot\psi'$
  and it follows that
  $\Delta \plDer\lnot\psi \text{ or } \Delta \plDer \psi$.
  If $\psi = \psi_1 \land \psi_2$, then by I.H. we have:
  \[
(\Delta \plDer\psi_1 \quad\text{ or }\quad \Delta \plDer \lnot\psi_1)\qquad \text{ and } \qquad
(\Delta \plDer\psi_2 \quad\text{ or }\quad \Delta \plDer \lnot\psi_2).
  \]
  Considering all four combinations yields
  $\Delta \plDer \psi_1\land\psi_2 \text{ or } \Delta \plDer \lnot(\psi_1\land\psi_2)$.

  From \eqref{eq:atom-entailment} and $\phi\in \Xi$, we obtain that
  $\Delta \plDer \phi$.
  On the other hand,
  from $\lnot\toFL{(\toPS{\phi})}\in \Xi$ it follows that 
  $\Delta \not\plDer \toFL{(\toPS{\phi})}$, and hence,
  because $\toFL{(\toPS{\phi})} = \bigvee \{ \bigwedge \Delta \mid \Delta \in S, \Delta \plDer \phi \}$, we have $\Delta \not\plDer \phi$.
  Thus we have a contradiction,
  and we conclude that $\phi\land\lnot\toFL{(\toPS{\phi})}$
  is $\Log$-inconsistent which proves that $\Der\phi \to \toFL{(\toPS{\phi})}$.
\end{proof}

\begin{lemma}[Existence lemma]
  \label{lem:existence}
  Assume that $\Log_\Diamond$ is one-step complete for $T$.
  For all $\alpha \in A$ and  all $\Gamma \in S$ there is a
  $t_{\alpha,\Gamma} \in T(S)$ such that for all $U \sse S$,
  \benum
  \item If $\Gamma \Der \dynmod{\alpha}\charf{U}$ then $t_{\alpha,\Gamma} \in \plift_S(U)$.
  \item If $\Gamma \Der \lnot\dynmod{\alpha}\charf{U}$ then $t_{\alpha,\Gamma}  \in \plift_S(U)$.
  \item If $\Gamma \not\Der \dynmod{\alpha}\charf{U}$ and $\dynmod{\alpha}\charf{U}\land\Gamma$ is $\Log$-consistent, then $t_{\alpha,\Gamma}  \in \plift_S(U)$.
  \eenum
  It follows that for all $\alpha \in A$ and all $\Gamma \in S$ there is a
  $t_{\alpha,\Gamma} \in T(S)$ such that for all $U \sse S$,
  \beq\label{eq:coherence}
  t_{\alpha,\Gamma} \in \plift_S(U) \quad\text{ iff }\quad
  \Gamma\land\dynmod{\alpha}\charf{U} \text{ is $\Log$-consistent}.
  \eeq
\end{lemma}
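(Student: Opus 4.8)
The plan is to realise the biconditional (\ref{eq:coherence}) as one-step satisfiability of a single finite set of one-step formulas over the finite set $S$, and to extract the witness $t_{\alpha,\Gamma}$ from the one-step completeness of $\Log_\Diamond$ for $T$. Concretely, since $S$ is finite, so is $\Pow(S)$, and I would define the one-step description
\[
\Theta_\Gamma = \{\, \dynmod{\alpha} U \mid U \sse S,\ \Gamma \land \dynmod{\alpha}\charf{U} \text{ is $\Log$-consistent}\,\} \cup \{\, \lnot\dynmod{\alpha} U \mid U \sse S,\ \Gamma \Der \lnot\dynmod{\alpha}\charf{U}\,\} ,
\]
a finite subset of $\Prop(\Plift(\Pow(S)))$ with $\onestepsem{\dynmod{\alpha} U} = \plift_S(U)$. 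The two defining conditions are exactly complementary ($\Gamma\land\dynmod{\alpha}\charf{U}$ is consistent iff $\Gamma\not\Der\lnot\dynmod{\alpha}\charf{U}$), so $\Theta_\Gamma$ decides $\dynmod{\alpha}U$ for every $U$; hence any $t$ with $t \in \onestepsem{\Theta_\Gamma}$ satisfies $t\in\plift_S(U)$ iff $\Gamma\land\dynmod{\alpha}\charf{U}$ is consistent. This immediately yields the three items (provable, refuted, undecided-but-consistent) and thus (\ref{eq:coherence}). Since we have fixed a single $\alpha$ and $\Theta_\Gamma$ only mentions the modality $\dynmod{\alpha}$ (interpreted by $\plift$) and subsets of $S$, its one-step (in)consistency, derivability and validity are those of the base logic $\Log_\Diamond$ for $T$. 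So by one-step completeness it suffices to prove that $\Theta_\Gamma$ is one-step consistent.

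I would prove one-step consistency by contradiction. If some finite subset were one-step inconsistent, then, collecting the positive and negative literals, $\oneDer \bigwedge_i \dynmod{\alpha} U_i \land \bigwedge_j \lnot\dynmod{\alpha} W_j \to \bot$, where each $\Gamma\land\dynmod{\alpha}\charf{U_i}$ is consistent and each $\Gamma \Der \lnot\dynmod{\alpha}\charf{W_j}$. By one-step soundness this is one-step valid, i.e. $\bigcap_i \plift_S(U_i) \sse \bigcup_j \plift_S(W_j)$.

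The key step, and the main obstacle, is to collapse the several positive diamonds to a single one: a priori the individual consistency of each $\dynmod{\alpha}\charf{U_i}$ with $\Gamma$ does not give joint consistency, and a conjunction $\dynmod{\alpha}\charf{U_1}\land\dynmod{\alpha}\charf{U_2}$ is not in general equivalent to a single diamond. Here the hypotheses on $T$ and $\plift$ do precisely the required work. I claim the above inclusion already holds with a single index, $\plift_S(U_{i_0}) \sse \bigcup_j \plift_S(W_j)$ for some $i_0$. Arguing contrapositively, if for every $i$ there were a $t_i \in \plift_S(U_i)$ with $t_i \notin \plift_S(W_j)$ for all $j$, then $t := \bigvee_i t_i$ (which exists because $T$ is left-quantalic) would, by diamond-likeness, satisfy $t \in \plift_S(U_i)$ for all $i$ (since $t_i \in \plift_S(U_i)$) and $t \notin \plift_S(W_j)$ for all $j$ (since no $t_i \in \plift_S(W_j)$), contradicting the inclusion. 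Both directions of the diamond-likeness biconditional are used, and this is the sole point at which left-quantalicity and diamond-likeness are needed.

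To finish, one-step completeness of $\Log_\Diamond$ for $T$ turns the single-index inclusion into $\oneDer \dynmod{\alpha} U_{i_0} \to \bigvee_j \dynmod{\alpha} W_j$, and the same therefore holds in $\Log$ (substitution instances of base axioms are $\Log$-axioms). Applying the substitution $\toFL{(-)}$, which sends $\dynmod{\alpha}U$ to $\dynmod{\alpha}\charf{U}$, together with Claim~1 in the proof of Lemma~\ref{lem:derivability}, lifts this to $\Der \dynmod{\alpha}\charf{U_{i_0}} \to \bigvee_j \dynmod{\alpha}\charf{W_j}$. Since $\Gamma \Der \lnot\dynmod{\alpha}\charf{W_j}$ for every $j$, propositional reasoning gives $\Gamma \Der \lnot\dynmod{\alpha}\charf{U_{i_0}}$, contradicting the consistency of $\Gamma\land\dynmod{\alpha}\charf{U_{i_0}}$ built into the definition of $\Theta_\Gamma$. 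Hence $\Theta_\Gamma$ is one-step consistent, one-step completeness yields a witness $t_{\alpha,\Gamma} \in \onestepsem{\Theta_\Gamma}$, and the complementarity observed above gives items 1--3 and the equivalence (\ref{eq:coherence}).
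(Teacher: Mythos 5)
Your proof is essentially correct, and it reorganises the argument in a way that differs noticeably from the paper's. The paper never forms a single one-step description: it first rules out, by contradiction, that \emph{no} $t$ satisfies conditions 1 and 2 (turning the failure into a one-step valid formula $\toPS{\phi(\Gamma)}$, hence derivable by completeness, hence refuting the consistency of $\Gamma$), fixes such an $s_{\alpha,\Gamma}$, then for \emph{each} undecided-but-consistent $U$ applies one-step completeness separately to the small set $\{\dynmod{\alpha}U\}\cup\{\lnot\dynmod{\alpha}X\mid\Gamma\Der\lnot\dynmod{\alpha}\charf{X}\}$ to get a witness $f_{\Gamma,U}$, and finally takes $t_{\alpha,\Gamma}=\bigvee_U f_{\Gamma,U}(\alpha)\vee s_{\alpha,\Gamma}$, using diamond-likeness only at the end to check the join still works. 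You instead apply one-step completeness once to the full set $\Theta_\Gamma$ and move the join/diamond-likeness trick \emph{inside} the consistency proof, to collapse several positive diamonds to a single one; this is arguably cleaner and makes explicit where each hypothesis is used. The one substantive caveat: your route passes from one-step \emph{derivability} of an inconsistency to the semantic inclusion $\bigcap_i\plift_S(U_i)\sse\bigcup_j\plift_S(W_j)$, which requires one-step \emph{soundness} of $\Log_\Diamond$ --- an assumption the lemma does not grant (it only assumes one-step completeness), and which the paper's proof avoids by only ever travelling from semantics to syntax. Since one-step soundness holds in all the paper's intended instances and is assumed elsewhere (the soundness Proposition), this is easily repaired by adding it to the hypotheses, but as written it is a use of an unstated assumption. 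Two cosmetic points: item 2 of the lemma should read $t_{\alpha,\Gamma}\notin\plift_S(U)$ (a typo in the paper that you correctly treat as such), and your ``single index $i_0$'' claim should be phrased to cover the case where $\Theta_\Gamma$ has no positive literals, where the empty join $\bot_{TS}$ already refutes the inclusion directly.
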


\begin{proof}
We spell out the details of the proof for the case that $\plift$ is a diamond-like lifting. For the case that
$\plift$ is box-like the roles of the positive and negative formulas of the form $\dynmod{\alpha}\phi$ and $\neg \dynmod{\alpha} \phi$ in the proof 
have to be switched. 
We now turn to the proof of the lemma.

Suppose for a contradiction that there is $\alpha \in \Lbls$ and $\Gamma \in S$ such that
no $t \in TS$ satisfies conditions 1 and 2 of the lemma. Consider the formula
\[ \phi(\Gamma) = \bigvee \{ \dynmod{\alpha}\charf{X} \mid X \sse S, \Gamma \plDer \neg \dynmod{\alpha}\charf{X} \} 
    \vee \bigvee \{ \neg \dynmod{\alpha}\charf{X} \mid X \sse S, \Gamma \plDer \dynmod{\alpha}\charf{X} \} \]
 and note that
 \[ \toPS{\phi(\Gamma) } = \bigvee \{ \dynmod{\alpha}X  \mid X \sse S, \Gamma \plDer \neg \dynmod{\alpha}\charf{X} \} 
    \vee \bigvee \{ \neg \dynmod{\alpha}X \mid X \sse S, \Gamma \plDer \dynmod{\alpha}\charf{X} \} \]
 Then by our assumption on $\alpha$ and $\Gamma$ we have $\onestepsem{\toPS{\phi(\Gamma)}} = (TS)^\Lbls$.
 Recall from Lemma~\ref{lem:onestepcomp} that one-step completeness of $\Log_\Diamond$ implies one-step completeness
 of $\Log$ wrt $T^\Lbls$. Therefore we obtain that $\oneDer\toPS{\phi(\Gamma) }$ and
 thus, by Lemma~\ref{lem:derivability}, that
 $\Der\phi(\Gamma)$. This yields a contradiction with our assumption that $\Gamma$ is $\Log$-consistent.
 For each $\Gamma \in S$ and $\alpha \in \Lbls$ we fix
 an element $s_{\alpha,\Gamma} \in TS$ satisfying conditions 1 and 2.
 
 Consider now $\Gamma \in S$ 
 and let $U \sse S$ be such that $\Gamma \not\Der \dynmod{\alpha}\charf{U}$ and $\dynmod{\alpha}\charf{U}\land\Gamma$ is $\Log$-consistent.
 As  $\dynmod{\alpha}\charf{U}\land\Gamma$ is $\Log$-consistent the set 
 $ \{ \dynmod{\alpha}\charf{U} \} \cup \{\neg \dynmod{\alpha} \xi_X \mid \Gamma \plDer \neg \dynmod{\alpha} \xi_X \} $ is $\Log$-consistent and 
 we can easily show - using Lemma~\ref{lem:derivability} - that the set
$ \{ \dynmod{\alpha}U \} \cup \{ \neg \dynmod{\alpha} X \mid \Gamma \plDer \neg \dynmod{\alpha} \xi_X \}$ is one-step $\Log$-consistent. Therefore by one-step completeness 
of $\Log$
there must be an $f_{\Gamma,U} \in (TS)^\Lbls$ such that
$$f_{\Gamma,U}\oneSat \bigwedge \left( \{ \dynmod{\alpha}U \} \cup \{ \neg \dynmod{\alpha}X \mid \Gamma \plDer \neg \dynmod{\alpha} \xi_X \}\right)$$
or, equivalently,
$$f_{\Gamma,U}(\alpha) \in \bigcap \left( \{\plift_S(U)\} \cup \{ S\setminus\plift_S(X) \mid \Gamma \plDer \neg \dynmod{\alpha} \xi_X \}\right).$$
Using the fact that $\plift$ is diamond-like we can now easily verify that for each $\Gamma \in S$ and $\alpha \in \Lbls$ the 
join $t_{\alpha,\Gamma} \mathrel{:=} \bigvee_{U \in \Xi} f_{\Gamma,U}(\alpha)  \vee s_{\alpha,\Gamma}$ 
with $\Xi = \{ U \sse X \mid \Gamma \not\Der \dynmod{\alpha}\charf{U} \mbox{ and } \dynmod{\alpha}\charf{U}\land\Gamma \mbox{ is $\Log$-consistent} \}$ 
satisfies all conditions of the lemma.
\end{proof}

Proposition~\ref{prop:existence} now follows immediately from Lemma~\ref{lem:existence} by taking $\transp{\gamma}(\alpha)(\Gamma) := t_{\alpha,\Gamma}$ for 
all $\alpha \in \atLbls$.

\subsection{Standard, coherent models}

We saw in the previous subsection that one-step completeness ensures the existence of a strongly coherent structure. However, this structure is not necessarily standard. We now show that from a strongly coherent structure, we can obtain a standard model which satisfies the usual coherence condition by extending the strongly structure inductively from atomic actions to all actions $\alpha \in \Lbls$ and proving that the resulting structure map
$\gamma\colon S \to (TS)^\Lbls$ is coherent.
%
%

We start by defining a $\gamma\colon S \to (TS)^\Lbls$ which is almost standard.
For technical reasons, we define $\gamma$ on tests from $\ClosedPhi$ in terms of membership.
Once we prove that truth is membership (Lemma~\ref{lem:dyn-truth}), it follows that $\gamma$ is standard.
This way we avoid a mutual induction argument.

\begin{definition}{Coherent dynamic structure}\label{def:coherentgamma}
Let $\gamma_0\colon S \to (TS)^\Lbls$ be the strongly coherent structure that exists by
Proposition~\ref{prop:existence}.
Define $\gamma\colon S \to (TS)^\Lbls$ inductively as follows: 
	\begin{eqnarray*}
		\transp{\gamma}(\alpha) & \mathrel{:=} & \transp{\gamma}_0(\alpha) \qquad \mbox{ for } \alpha \in \atLbls \\
		\transp{\gamma}(\phi?)(\Gamma) &  \mathrel{:=} & \left\{
		\begin{array}{llcl}
			\eta_S(\Gamma) & \mbox{ if } \phi \in \Gamma & \mbox{ and } & \phi \in \ClosedPhi \\
			\eta_S(\Gamma) & \mbox{ if } \Gamma \in \sem{\phi}_{(X,\gamma,V)} & \mbox{ and } & \phi \not\in \ClosedPhi \\
			\bot_{TS} & \mbox{ otherwise.}
		\end{array}
		\right. \\
		\transp{\gamma}(\sig(\alpha_1,\dots,\alpha_n))(\Gamma) &  \mathrel{:=} & \sig_S (\transp{\gamma}(\alpha_1)(\Gamma),\dots, \transp{\gamma}(\alpha_n)(\Gamma))) \\
		\transp{\gamma}(\alpha^*)(\Gamma) &  \mathrel{:=} & \transp{\gamma}(\alpha)^* (\Gamma)
	\end{eqnarray*}
	where $V$ is the canonical valuation $V(p) = \{ \Delta \in S \mid p \in \Delta\}$.
\end{definition}
The rest of the section will be dedicated to proving that $\gamma$ is in fact coherent. This can be done largely similarly to what
we did in our previous work~\cite{HKL:cpdl-report} for the iteration-free case. The main difference is obviously the presence of the $*$-operator.
Here a crucial role is played by the following monotone operator on $\Pow(S)$ that allows us
to formalise a logic-induced notion of reachability.  

\begin{definition}{$F_\beta^X$}
     	For $\beta \in \Lbls$ and $X \subseteq S$ we  define an operator 
	\[ \begin{array}{rcl}
		 F_\beta^X : \Pow S  & \to & \Pow S \\
				Y & \mapsto & \{ \Delta \in S \mid \Delta \wedge  \dynmod{\beta} \charf{Y} \mbox{ consistent} \} \cup X
	\end{array}
	\]
	It is easy to see that this is a monotone operator, its least fixpoint will be denoted 
	by $Z^X_\beta$.
\end{definition}

 \begin{lemma}\label{lem:Zequiv}
 	For all $\Delta \in S$ and all $X \subseteq S$ we have: 
	$\Delta \wedge \dynmod{\beta} \charf{Z_\beta^X}$ is consistent $\quad \Rightarrow \quad \Delta \in Z_\beta^X$.
 \end{lemma}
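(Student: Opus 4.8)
The plan is to derive the implication directly from the defining property of the least fixpoint, so that essentially no work beyond unfolding a definition is required. First I would invoke the Knaster--Tarski theorem: since $F_\beta^X\colon \Pow(S) \to \Pow(S)$ is a monotone operator on the complete lattice $\Pow(S)$, its least fixpoint $Z_\beta^X$ exists and in particular satisfies the fixpoint equation $F_\beta^X(Z_\beta^X) = Z_\beta^X$. Spelling out the definition of $F_\beta^X$ at the argument $Z_\beta^X$, this equation reads
\[ Z_\beta^X = \{ \Delta \in S \mid \Delta \wedge \dynmod{\beta}\charf{Z_\beta^X} \text{ is } \Log\text{-consistent} \} \cup X. \]

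Next I would simply read off the desired implication from this equation. Assuming the hypothesis that $\Delta \wedge \dynmod{\beta}\charf{Z_\beta^X}$ is $\Log$-consistent, the element $\Delta$ lies in the first set on the right-hand side of the displayed equation, and hence in the union; that is, $\Delta \in Z_\beta^X$, as claimed.

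I expect no genuine obstacle here: the entire content of the lemma is the fixpoint equation, and the direction being proved is precisely the ``easy'' inclusion, for which only the prefixpoint property $F_\beta^X(Z_\beta^X) \subseteq Z_\beta^X$ of the least fixpoint is needed, not full fixpoint equality. The one point worth stating explicitly is that $Z_\beta^X$ is a well-defined subset of $S$ \emph{before} the formula $\charf{Z_\beta^X}$ is formed, so the apparent circularity in ``$\Delta$ consistent with $\dynmod{\beta}\charf{Z_\beta^X}$'' is harmless: $Z_\beta^X$ is fixed first as the least fixpoint, and $\charf{Z_\beta^X}$ is then simply a fixed propositional formula over $\ClosedPhi$. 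The harder, complementary direction --- that membership of $\Delta$ in $Z_\beta^X$ can be reflected back into provability, which is what the reachability argument around the $*$-operator ultimately needs --- is the one that will demand real work, but it is not what this lemma asserts.
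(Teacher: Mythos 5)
Your argument is correct and is exactly the paper's proof, which simply notes that the claim is an immediate consequence of $Z_\beta^X$ being a fixpoint of the monotone operator $F_\beta^X$; your spelling out of the fixpoint equation and the remark about the non-circularity of forming $\charf{Z_\beta^X}$ just make the same one-line argument explicit.
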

 \begin{proof}
 	This is an immediate consequence of the fact that $Z_\beta^X$ is a fixpoint of $F_\beta^X$.
 \end{proof}
 The following technical lemma is required for the inductive proof 
 of the first coherence Lemma~\ref{lem:coh1}. 
 
 \begin{lemma}\label{lem:IH1}
	Let $\beta \in \Lbls$ be an action such that for all $\Gamma \in S$ and all $X \subseteq S$ we have
	\[  \Gamma \wedge \dynmod{\beta}\charf{X} \; \mbox{consistent} \quad \Rightarrow \quad \transp{\gamma}(\Gamma) \in \plift_S(X) .\] 
	Then $\Gamma \in Z^{X}_\beta$ implies $\transp{\gamma}(\beta^*)(\Gamma) \in \plift_S(X)$.
\end{lemma}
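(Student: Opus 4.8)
The plan is to avoid reasoning about \emph{how many} $\beta$-steps are needed to reach $X$ and instead argue by fixpoint induction. Set
\[ W := \{\Gamma \in S \mid \transp{\gamma}(\beta^*)(\Gamma) \in \plift_S(X)\}. \]
Since $Z^X_\beta$ is the least fixpoint of the monotone operator $F^X_\beta$, it lies below every pre-fixpoint, so it suffices to prove $F^X_\beta(W) \subseteq W$; this gives $Z^X_\beta \subseteq W$, which is exactly the assertion $\Gamma \in Z^X_\beta \Rightarrow \transp{\gamma}(\beta^*)(\Gamma) \in \plift_S(X)$. Before the case split I would record the two facts that drive the two summands of $F^X_\beta$. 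First, because $\transp{\plift}$ is a monad morphism it preserves units, so $\transp{\plift}_S(\eta_S(\Gamma))$ is the principal neighbourhood collection $\{U \subseteq S \mid \Gamma \in U\}$, whence $\eta_S(\Gamma) \in \plift_S(X) \iff \Gamma \in X$. Second, since $\transp{\gamma}(\beta^*)(\Gamma) = \bigvee_{j<\omega}\klIter{\transp{\gamma}(\beta)}{j}(\Gamma)$ by \eqref{eq:kleisli-iter} and the pointwise definition of the join on Kleisli arrows, diamond-likeness of $\plift$ gives
\[ \transp{\gamma}(\beta^*)(\Gamma) \in \plift_S(X) \iff \exists j<\omega:\ \klIter{\transp{\gamma}(\beta)}{j}(\Gamma) \in \plift_S(X). \]

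Now take $\Gamma \in F^X_\beta(W)$ and split according to the definition of $F^X_\beta$. If $\Gamma \in X$, then $\klIter{\transp{\gamma}(\beta)}{0}(\Gamma) = \eta_S(\Gamma) \in \plift_S(X)$ by the first fact, so $\Gamma \in W$ by the displayed equivalence with $j=0$. If instead $\Gamma \wedge \dynmod{\beta}\charf{W}$ is consistent, then the hypothesis of the lemma, instantiated at the subset $W \subseteq S$, yields directly $\transp{\gamma}(\beta)(\Gamma) \in \plift_S(W)$ (note that this instantiation is what lets me dispense with any monotonicity argument). This is the crucial case.

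To close it I would compose along $\beta^*$. By definition $W = (\transp{\gamma}(\beta^*))^{-1}(\plift_S(X))$, so Lemma~\ref{lem:comp} applied to $f = \transp{\gamma}(\beta)$ and $g = \transp{\gamma}(\beta^*)$ gives
\[ \transp{\gamma}(\beta)(\Gamma) \in \plift_S(W) \iff (\transp{\gamma}(\beta)\kcirc\transp{\gamma}(\beta^*))(\Gamma) \in \plift_S(X). \]
Since $T$ is left-quantalic, Kleisli composition left-distributes over the join, and using $\transp{\gamma}(\beta^*) = \transp{\gamma}(\beta)^*$ together with \eqref{eq:kleisli-iter} we get $\transp{\gamma}(\beta)\kcirc\transp{\gamma}(\beta)^* = \bigvee_{j<\omega}\klIter{\transp{\gamma}(\beta)}{j+1}$, hence $(\transp{\gamma}(\beta)\kcirc\transp{\gamma}(\beta^*))(\Gamma) = \bigvee_{j\ge 1}\klIter{\transp{\gamma}(\beta)}{j}(\Gamma)$. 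Applying diamond-likeness once more, membership of this join in $\plift_S(X)$ produces a witness $j\ge 1$ with $\klIter{\transp{\gamma}(\beta)}{j}(\Gamma) \in \plift_S(X)$, so by the displayed equivalence $\Gamma \in W$. Thus $F^X_\beta(W)\subseteq W$, and the proof is complete.

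The main obstacle is precisely this last step: converting the local datum $\transp{\gamma}(\beta)(\Gamma)\in\plift_S(W)$ (``one $\beta$-step lands in $W$'') into the global conclusion $\transp{\gamma}(\beta^*)(\Gamma)\in\plift_S(X)$. The three standing assumptions conspire to make it go through --- Lemma~\ref{lem:comp} rewrites the composition as a membership condition on $\transp{\gamma}(\beta)$, left-quantalicity lets me pull $\transp{\gamma}(\beta)$ inside the iteration join, and diamond-likeness lets me pass freely between $\transp{\gamma}(\beta^*)(\Gamma)$ and its approximants $\klIter{\transp{\gamma}(\beta)}{j}(\Gamma)$. Framing the whole argument as ``$W$ is a pre-fixpoint of $F^X_\beta$'' is what sidesteps any attempt to find a uniform bound on the number of iteration steps, which would be awkward since different atoms of $Z^X_\beta$ reach $X$ after different numbers of $\beta$-steps.
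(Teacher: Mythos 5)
Your proof is correct and follows essentially the same route as the paper's: you define the same set ($W$ is the paper's $Y$), show it is a pre-fixpoint of $F^X_\beta$ via the same two cases, and close the inductive step with the identical combination of Lemma~\ref{lem:comp}, left-distributivity of Kleisli composition over joins, and diamond-likeness. No gaps.
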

\begin{proof}
	This proof is using our assumption that $\plift$ is diamond-like. 
	Recall first that by definition we have $\transp{\gamma}(\beta^*) = \transp{\gamma}(\beta)^*$, thus we need to show
	that $\transp{\gamma}(\beta)^*(\Gamma) \in \plift_S(X)$.
	 Let $Y = \{ \Delta \in S  \mid \transp{\gamma}(\beta)^*(\Delta) \in \plift_S(X) \}$.
	  In order to prove our claim it suffices to show that $F^{X}_\beta(Y) \subseteq Y$, ie, that
	  $Y$ is a prefixed point of $F^X_\beta$
 (as $Z^{X}_\beta$ is the smallest such prefixed point and as 
  $Z^{X}_\beta \subseteq Y$ is equivalent to the claim of the lemma).
 Let $\Gamma \in F^{X}_\beta(Y)$. We need to show that $\Gamma \in Y$.
 In case $\Gamma \in X$ we have $\transp{\gamma}^0(\Gamma) = \eta(\Gamma) \in  \plift_S(\hat\phi)$ because
 $ \eta(\Gamma) \in  \plift_S(\hat\phi)$ is equivalent to $\Gamma \in X$  as $\transp{\plift}$ is a monad morphism.
 Suppose now that $\Gamma \wedge \dynmod{\beta}\charf{Y}$ is 
 consistent. By our assumption on $\beta$ this implies that 
 \[ \transp{\gamma}(\beta)(\Gamma) \in \plift_S(Y) =  \plift_S( \{\Delta \mid \transp{\gamma}(\beta)^*(\Delta) \in \plift_S (X)\}) .\]
 Using Lemma~\ref{lem:comp} 
 this implies
 \[ (\transp{\gamma}(\beta) \kcirc  \transp{\gamma}(\beta)^*)(\Gamma) \in \plift_S(X) \]
 and 
 \[ \transp{\gamma}(\beta) \kcirc  \transp{\gamma}(\beta)^* (\Gamma) = (\transp{\gamma}(\beta) \kcirc  \bigvee_{i} \klIter{\transp{\gamma}(\beta)}{i} )(\Gamma)
 = \bigvee_{i} \klIter{\transp{\gamma}(\beta)}{i+1}  (\Gamma) \]
 where the last equality follows from the fact that we are working with a monad $T$ whose
  Kleisli composition left-distributes over joins.
 As $\plift$ is assumed to be diamond-like, it follows
 that there is a $j \geq 1$ such that
 $\klIter{\transp{\gamma}(\beta)}{j} (\Gamma) \in \plift_S(X)$
 and thus $\Gamma \in Y$ as required.
  \end{proof}
We are now ready to prove two crucial coherence lemmas. 
As we are ultimately only interested in the truth of formulas in
$\ClosedPhi$ we can confine ourselves to what we call {\em relevant} actions:
\begin{definition}{Relevant test, relevant action}
	A test $\phi?$ is called {\em relevant} if $\phi \in \ClosedPhi$.
	An action $\alpha \in \Lbls$ is called {\em relevant} if it only contains relevant tests.
\end{definition}

The following lemma proves the first half of the announced coherence. 

\begin{lemma}\label{lem:coh1}
	For all relevant actions $\alpha \in \Lbls$, $\Gamma \in S$ and all $X \subseteq S$ we have
	\[ \Gamma \wedge \dynmod{\alpha}\charf{X} \; \mbox{consistent} \quad \Rightarrow \quad \transp{\gamma}(\alpha)(\Gamma) \in \plift_S(X) .\] 
\end{lemma}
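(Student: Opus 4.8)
The plan is to prove the statement by induction on the structure of the relevant action $\alpha$. The base case, where $\alpha = a \in \atLbls$ is an atomic action, is exactly the strong coherence property of $\transp{\gamma}_0 = \transp{\gamma}$ guaranteed by Proposition~\ref{prop:existence} (via the equivalence \eqref{eq:coherence}): consistency of $\Gamma \land \dynmod{a}\charf{X}$ is equivalent to $\transp{\gamma}(a)(\Gamma) \in \plift_S(X)$, which gives the desired implication. For relevant tests $\phi?$ with $\phi \in \ClosedPhi$, I would unfold the definition of $\transp{\gamma}(\phi?)$ from Definition~\ref{def:coherentgamma} together with the test frame condition $\dynmod{\phi?}p \lra (\phi \land p)$ in $\Fr'$; consistency of $\Gamma \land \dynmod{\phi?}\charf{X}$ forces $\phi \in \Gamma$ and $\Gamma \in X$, whence $\transp{\gamma}(\phi?)(\Gamma) = \eta_S(\Gamma)$, and $\eta_S(\Gamma) \in \plift_S(X) \iff \Gamma \in X$ because $\transp{\plift}$ is a monad morphism.

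For the composite and pointwise-operation cases I would use the induction hypothesis together with the relevant frame conditions and axioms. For sequential composition $\alpha;\beta$, I would use the frame condition $\dynmod{\alpha;\beta}p \lra \dynmod{\alpha}\dynmod{\beta}p$ to rewrite $\dynmod{\alpha;\beta}\charf{X}$, apply the induction hypothesis for $\alpha$ and $\beta$, and invoke Lemma~\ref{lem:comp} to match up $(\transp{\gamma}(\alpha)\kcirc\transp{\gamma}(\beta))(\Gamma) \in \plift_S(X)$ with $\transp{\gamma}(\alpha)(\Gamma) \in \plift_S(\transp{\gamma}(\beta)^{-1}(\plift_S(X)))$. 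For a pointwise operation $\syn{\sigma}(\alpha_1,\dots,\alpha_n)$, I would use the corresponding rank-1 axiom $\dynmod{\syn{\sigma}(\vec\alpha)}p \lra \pwaxiom{\yon\chi}{\vec\alpha}$ together with positivity of $\chi$ (so that $\yon\chi$ is built only from $\land$ and $\lor$), pushing the consistency assumption through the Boolean structure of the axiom and applying the induction hypotheses for the $\alpha_i$; the semantic side follows from the defining equation $\transp{\plift}\circ\sigma = \chi\circ\transp{\plift}^n$.

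The main obstacle is the iteration case $\alpha = \beta^*$, and this is precisely what Lemma~\ref{lem:IH1} was engineered to handle. Here I would proceed as follows: assuming $\Gamma \land \dynmod{\beta^*}\charf{X}$ is consistent, I would use the star frame condition $\dynmod{\beta^*}p \lra p \lor \dynmod{\beta}\dynmod{\beta^*}p$ to show that $\Gamma$ lies in the least fixpoint $Z_\beta^X$ of the operator $F_\beta^X$ — intuitively, consistency of $\dynmod{\beta^*}\charf{X}$ at $\Gamma$ means $\Gamma$ can reach $X$ in finitely many $\beta$-steps, which is exactly captured by membership in $Z_\beta^X$ via Lemma~\ref{lem:Zequiv}. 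Since $\beta$ is a proper subaction, the induction hypothesis gives precisely the premise required by Lemma~\ref{lem:IH1}, namely that $\Gamma \land \dynmod{\beta}\charf{X'}$ consistent implies $\transp{\gamma}(\beta)(\Gamma) \in \plift_S(X')$ for all $X'$. Applying Lemma~\ref{lem:IH1} then yields $\transp{\gamma}(\beta^*)(\Gamma) = \transp{\gamma}(\beta)^*(\Gamma) \in \plift_S(X)$, completing the induction. The delicate point is establishing $\Gamma \in Z_\beta^X$ from the consistency assumption; I expect this to follow by showing that the set of atoms $\Delta$ with $\Delta \land \dynmod{\beta^*}\charf{X}$ consistent is closed under $F_\beta^X$ (hence contains $Z_\beta^X$) and conversely is contained in it, relying on the star frame condition and the star rule to convert the finitary fixpoint unfolding into derivability.
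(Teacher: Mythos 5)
Your proposal follows essentially the same route as the paper: induction on $\alpha$, with strong coherence (Proposition~\ref{prop:existence}) for atomic actions, the test/composition/pointwise axioms plus Lemma~\ref{lem:comp} for the corresponding cases, and Lemma~\ref{lem:IH1} combined with the star rule for $\beta^*$. The only point where the paper does more work than you anticipate is the $\beta^*$ case, where the needed inclusion $\{\Delta \in S \mid \Delta\wedge\dynmod{\beta^*}\charf{X}\ \mbox{consistent}\}\sse Z^{X}_\beta$ (note this is the converse of the prefixed-point direction you mention first) is obtained by proving $\vdash \dynmod{\beta}\charf{Z^{X}_\beta}\vee\charf{X}\to\charf{Z^{X}_\beta}$ and then applying the star rule with $\psi=\charf{Z^{X}_\beta}$ and $\phi=\charf{X}$, exactly as your final sentence anticipates.
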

\begin{proof}
  By induction on $\alpha$. The base case holds trivially as $\gamma$ is strongly coherent
  for all atomic actions.
		Let $\alpha = \phi?$ for some $\phi \in \ClosedPhi$ (here we can  assume $\phi \in \ClosedPhi$ as we only consider relevant actions) and suppose $\Gamma \wedge \dynmod{\phi?}\charf{X}$ is consistent for
		some $X \subseteq S$. Then, as $\plift$ is diamond-like, we have
		$\Gamma \wedge \phi \wedge \charf{X}$ is consistent. This implies $\phi \in \Gamma$ and $\Gamma \in X$.
		As $\phi \in \Gamma$, we have by the definition of $\gamma$ that 
		$\transp{\gamma}(\phi?)(\Gamma) = \eta_S(\Gamma)$ and thus $\Gamma \in X$ implies $\transp{\gamma}(\phi?)(\Gamma) \in \plift_S(X)$
		as required.
		
         For an $n$-ary pointwise operation $\sigma \in \Sigma$, we want to show that
         \[\Gamma \wedge \dynmod{\syn{\sigma}(\alpha_a,\ldots,\alpha_n)}\charf{X} \; \mbox{consistent} \quad \Rightarrow \quad \sigma_S^S(\transp{\gamma}(\alpha_1)(\Gamma),\ldots,\transp{\gamma}(\alpha_n)(\Gamma)) \in \plift_S(X)
         \]        
         Using the $\sigma$-axiom and that $\transp\plift \circ \sigma = \chi \circ \transp\plift^n$, this is equivalent to
         \beq\label{eq:pw-coh1}
         \Gamma \wedge \pwaxiomm{\yon{\chi}}{\alpha_1,\ldots,\alpha_n}{\charf{X}} \; \mbox{consistent} \quad \Rightarrow
           \quad X \in \chi_S(\transp\plift(\transp{\gamma}(\alpha_1)(\Gamma)),\ldots,\transp\plift(\transp{\gamma}(\alpha_n)(\Gamma)))
         \eeq
         and \eqref{eq:pw-coh1} can be proved by induction on $\yon{\chi}$ in a manner very similar to
         the one used in the proof of Lemma 27 in \cite{HKL:cpdl-report}.
        
        
        Suppose $\alpha$ is of the form $\alpha = \beta_0;\beta_1$ and suppose 
        $\Gamma \wedge \dynmod{\beta_0;\beta_1}\charf{U}$ is consistent for some $U \sse S$.
        Using the compositionality axiom we have $\Der \dynmod{\beta_0;\beta_1}\charf{U} \leftrightarrow 
        \dynmod{\beta_0}\dynmod{\beta_1}\charf{U}$. Therefore
        $\Gamma \wedge  \dynmod{\beta_0}\dynmod{\beta_1}\charf{U}$ is consistent.
        This implies in turn that $\Gamma \wedge  \dynmod{\beta_0} (\top \wedge \dynmod{\beta_1}\charf{U})$
        is consistent and, as $\Der \top \leftrightarrow \bigvee_{\Delta \in S} \bigwedge \Delta$ by Lemma~\ref{lem:derivability}, 
        we obtain that $\Gamma \wedge \dynmod{\beta_0} \left( ( \bigvee_{\Delta \in S} \bigwedge \Delta) \wedge \dynmod{\beta_1}\charf{U}\right)$
        and thus $\Gamma \wedge \dynmod{\beta_0} \left(  \bigvee_{\Delta \in S} \bigwedge (\Delta \wedge \dynmod{\beta_1}\charf{U}) \right)$
        is consistent. Clearly the latter implies that 
        $\Gamma \wedge \dynmod{\beta_0}\left( \bigvee_{\Delta \in Y}  \bigwedge (\Delta \wedge \dynmod{\beta_1}\charf{U}) \right)$ is consistent
        for $Y \mathrel{:=} \{ \Delta \in  S \mid \Delta \wedge  \dynmod{\beta_1}\charf{U} \mbox{ consistent}\}$. Therefore
        we also have $\Gamma  \wedge \dynmod{\beta_0}\charf{Y}$ is consistent. 
        Now we apply the induction hypothesis to get
        \[ \transp{\gamma}(\beta_0)(\Gamma) \in \plift_S(Y) = \plift_S(\{\Delta \in  S \mid \Delta \wedge \dynmod{\beta_1}\charf{U} \mbox{ consistent}\}) 
        \stackrel{\mbox{\tiny I.H.}}{\sse} \plift_S(\{\Delta \in S \mid \transp{\gamma}(\beta_1)(\Delta) \in \plift_S(U) \})\]
        and by Lemma~\ref{lem:comp} we conclude that $\transp{\gamma}(\beta_0;\beta_1) (\Gamma)= \transp{\gamma}(\beta_0) \kcirc
        \transp{\gamma}(\beta_1) (\Gamma) \in \plift_S(U)$.

	Suppose now $\alpha = \beta^*$. It follows from Lemma~\ref{lem:IH1} and the I.H. on $\beta$
	that  $\Gamma \in Z^{X}_\beta$ implies $\transp{\gamma}(\beta^*)(\Gamma) \in \plift_S(X)$.
	Therefore it suffices to prove that $\Gamma \wedge \dynmod{\beta^*}\charf{X}$
        is consistent  implies $\Gamma \in Z^{X}_\beta$. 
	
	Suppose that $\Gamma \wedge \dynmod{\beta^*}\charf{X}$
        is consistent and recall the $\diamond$-induction rule:
 \begin{center}
 	\AxiomC{$\vdash \dynmod{\beta} \psi \vee \phi \to \psi$}
	\UnaryInfC{$\vdash \dynmod{\beta^*}\phi \to \psi$}
	\DisplayProof
 \end{center}
 
Our claim is that 
\begin{equation}\label{equ:indrule_premise}
	\vdash \dynmod{\beta} \charf{Z^{X}_\beta} \vee \charf{X} \to   \charf{Z^{X}_\beta}    \tag*{($+$)}
\end{equation}
Before we prove \ref{equ:indrule_premise} let us see why it suffices to complete the proof:
If \ref{equ:indrule_premise} holds, we can apply the induction
rule in order to obtain 
\begin{equation}\label{equ:indrule_con}
 \vdash \dynmod{\beta^*}\charf{X} \to  \charf{Z^{X}_\beta} .
\end{equation}
By assumption we have $\Gamma \wedge \dynmod{\beta^*}\charf{X}$.
Together with (\ref{equ:indrule_con}) this implies
that $\Gamma \wedge  \charf{Z^{X}_\beta}$ are consistent and thus, by Lemma~\ref{lem:Zequiv}, that
$\Gamma \in Z^{X}_\beta$ as required.

\underline{\bf Proof of \ref{equ:indrule_premise}:}
 Suppose for a contradiction that  \ref{equ:indrule_premise} does not  hold. This implies that
 $(\dynmod{\beta} \charf{Z^{X}_\beta} \vee \charf{X}) \wedge  \neg \charf{Z^{X}_\beta}$
 is consistent. We distinguish two cases. \\
 {\bf Case 1} $\dynmod{\beta} \charf{Z^{X}_\beta} \wedge \neg \charf{Z^{X}_\beta}$ is consistent. Then 
		there is a maximal consistent set $\Xi$ such that $\dynmod{\beta} \charf{Z^{X}_\beta}, \neg \charf{Z^{X}_\beta} \in \Xi$.
		Let $\Delta := \Xi \cap \ClosedPhi$. 
		By definition and (\ref{eq:atom-entailment}) we know that $\Delta \Der  \neg \charf{Z^{X}_\beta}$ and thus
		 $\Delta \in S \setminus Z^X_\beta$. Furthermore $\Delta \wedge \dynmod{\beta} \charf{Z^{X}_\beta}$ is consistent.
		The latter implies, again by Lemma~\ref{lem:Zequiv}, that $\Delta \in Z^X_\beta$ which is a contradiction and we conclude
		that $\dynmod{\beta} \charf{Z^{X}_\beta} \wedge \neg \charf{Z^{X}_\beta}$ cannot be consistent. \\
{\bf Case 2}  $\charf{X} \wedge \neg \charf{Z^{X}_\beta}$ is consistent. Again - using a similar argument to the previous case - 
	     this implies that there is an atom $\Delta \in S \setminus Z^X_\beta$ such that
	     $\Delta \wedge \charf{X}$ is consistent. But the latter entails that $\Delta \in X \subseteq Z^X_\beta$ which yields an obvious contradiction.
\end{proof}

\begin{lemma}\label{lem:coh2}
	For all  $\dynmod{\alpha}\phi \in \ClosedPhi$ and all $\Gamma \in S$ we have
	\[ \transp{\gamma}(\alpha)(\Gamma) \in \plift_S (\hat\phi) \quad \Rightarrow \quad \dynmod{\alpha}\phi \in \Gamma. \]
\end{lemma}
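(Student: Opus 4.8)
The plan is to prove the stated implication by induction on the structure of the (necessarily relevant, by closure conditions~1 and~3) action $\alpha$, ranging over all $\phi$ with $\dynmod{\alpha}\phi\in\ClosedPhi$. Combined with Lemma~\ref{lem:coh1} this yields the full coherence biconditional of Definition~\ref{def:coherent-model}: the converse direction follows from Lemma~\ref{lem:coh1} applied with $X=\hat\phi$, using that $\Der\phi\lra\charf{\hat\phi}$. Before the induction I would record three facts used throughout. First, since $\toPS{\phi}=\hat\phi$ for $\phi\in\ClosedPhi$, Lemma~\ref{lem:derivability}(2) gives $\Der\phi\lra\charf{\hat\phi}$, and hence by the congruence rule $\Der\dynmod{\alpha}\phi\lra\dynmod{\alpha}\charf{\hat\phi}$. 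Second, since $\transp{\plift}$ is a monad morphism, $\eta_S(\Gamma)\in\plift_S(U)$ iff $\Gamma\in U$. Third, by diamond-likeness applied to the empty join, $\bot_{TS}\notin\plift_S(U)$ for every $U$. I will also use repeatedly that $\Gamma$ is a maximal $\Log$-consistent subset of $\ClosedPhi$, so whenever $\dynmod{\alpha}\phi\in\ClosedPhi$ and $\Gamma\Der\dynmod{\alpha}\phi$, maximality forces $\dynmod{\alpha}\phi\in\Gamma$.

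For the base case $\alpha=a\in\atLbls$, strong coherence of $\gamma_0$ gives $\transp{\gamma}(a)(\Gamma)\in\plift_S(\hat\phi)$ iff $\Gamma\wedge\dynmod{a}\charf{\hat\phi}$ is consistent, which by the first fact is equivalent to consistency of $\Gamma\wedge\dynmod{a}\phi$, hence to $\dynmod{a}\phi\in\Gamma$. For a relevant test $\alpha=\psi?$ (here $\psi\in\ClosedPhi$ by closure~3), if $\psi\in\Gamma$ then $\transp{\gamma}(\psi?)(\Gamma)=\eta_S(\Gamma)$, so the hypothesis and the second fact give $\phi\in\Gamma$; the test axiom $\Der\dynmod{\psi?}\phi\lra(\psi\land\phi)$ then yields $\Gamma\Der\dynmod{\psi?}\phi$, whence $\dynmod{\psi?}\phi\in\Gamma$; if $\psi\notin\Gamma$ then $\transp{\gamma}(\psi?)(\Gamma)=\bot_{TS}\notin\plift_S(\hat\phi)$ by the third fact, and the implication holds vacuously. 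For a pointwise operation $\alpha=\syn{\sigma}(\alpha_1,\ldots,\alpha_n)$, I would use $\transp{\plift}\circ\sigma=\chi\circ\transp{\plift}^n$ to rewrite the hypothesis as $\hat\phi\in\chi_S(\transp{\plift}(\transp{\gamma}(\alpha_1)(\Gamma)),\ldots,\transp{\plift}(\transp{\gamma}(\alpha_n)(\Gamma)))$ and then argue by induction on the positive term $\yon{\chi}$ (built only from $\land$ and $\lor$), the base cases being discharged by the induction hypothesis on the $\alpha_i$, exactly dually to the corresponding step in Lemma~\ref{lem:coh1} and as in Lemma~27 of \cite{HKL:cpdl-report}; positivity is what guarantees no negations interfere.

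For composition $\alpha=\beta_0;\beta_1$, I would apply Lemma~\ref{lem:comp} to rewrite $(\transp{\gamma}(\beta_0)\kcirc\transp{\gamma}(\beta_1))(\Gamma)\in\plift_S(\hat\phi)$ as $\transp{\gamma}(\beta_0)(\Gamma)\in\plift_S(W)$, where $W=\{\Delta\in S\mid\transp{\gamma}(\beta_1)(\Delta)\in\plift_S(\hat\phi)\}$. Since $\dynmod{\beta_1}\phi\in\ClosedPhi$ (closure~1 plus subformulas), the induction hypothesis on $\beta_1$ gives $W\sse\widehat{\dynmod{\beta_1}\phi}$, so monotonicity of $\plift$ yields $\transp{\gamma}(\beta_0)(\Gamma)\in\plift_S(\widehat{\dynmod{\beta_1}\phi})$; as $\dynmod{\beta_0}\dynmod{\beta_1}\phi\in\ClosedPhi$, the induction hypothesis on $\beta_0$ gives $\dynmod{\beta_0}\dynmod{\beta_1}\phi\in\Gamma$, and the compositionality frame condition finally delivers $\dynmod{\beta_0;\beta_1}\phi\in\Gamma$.

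The iteration case $\alpha=\beta^*$ is the main obstacle. Since $\transp{\gamma}(\beta^*)(\Gamma)=\bigvee_n\klIter{\transp{\gamma}(\beta)}{n}(\Gamma)$, diamond-likeness converts the hypothesis into the existence of some $n$ with $\klIter{\transp{\gamma}(\beta)}{n}(\Gamma)\in\plift_S(\hat\phi)$. I would then prove, by an inner induction on $n$, that $\klIter{\transp{\gamma}(\beta)}{n}(\Delta)\in\plift_S(\hat\phi)$ implies $\dynmod{\beta^*}\phi\in\Delta$ for all $\Delta\in S$. For $n=0$ the second fact gives $\phi\in\Delta$, whence $\dynmod{\beta^*}\phi\in\Delta$ via the $*$-fixpoint axiom $\Der\dynmod{\beta^*}\phi\lra\phi\lor\dynmod{\beta}\dynmod{\beta^*}\phi$. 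For $n+1$, Lemma~\ref{lem:comp} rewrites the hypothesis as $\transp{\gamma}(\beta)(\Delta)\in\plift_S(V_n)$ with $V_n=\{\Delta'\mid\klIter{\transp{\gamma}(\beta)}{n}(\Delta')\in\plift_S(\hat\phi)\}$; the inner hypothesis gives $V_n\sse\widehat{\dynmod{\beta^*}\phi}$, monotonicity gives $\transp{\gamma}(\beta)(\Delta)\in\plift_S(\widehat{\dynmod{\beta^*}\phi})$, and the outer induction hypothesis on $\beta$ applied to the formula $\dynmod{\beta^*}\phi$ (legitimate since $\dynmod{\beta}\dynmod{\beta^*}\phi\in\ClosedPhi$ by closure~4) yields $\dynmod{\beta}\dynmod{\beta^*}\phi\in\Delta$, so the fixpoint axiom again gives $\dynmod{\beta^*}\phi\in\Delta$. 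The delicate points I expect to argue most carefully are that diamond-likeness is precisely what permits passing from the infinite join to a single finite approximant, and that the outer induction runs on the action $\beta$ rather than on the formula, so that invoking it for $\dynmod{\beta^*}\phi$ is sound even though this formula mentions $\beta^*$.
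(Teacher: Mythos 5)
Your proposal is correct and follows essentially the same route as the paper: an outer induction on $\alpha$, with the star case handled by using diamond-likeness to extract a single finite iterate $\klIter{\transp{\gamma}(\beta)}{n}$, an inner induction on $n$ via Lemma~\ref{lem:comp}, monotonicity of $\plift$, the outer hypothesis on $\beta$ applied to $\dynmod{\beta^*}\phi$, and the fixpoint axiom. The paper only writes out the test and iteration cases, so your explicit treatment of the atomic, pointwise-operation and composition cases (and the three preliminary facts) merely fills in details the paper leaves implicit.
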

\begin{proof}
	Again this is proven by induction on $\alpha$. 
	Let $\alpha = \psi?$ and suppose
	$\transp{\gamma}(\psi?)(\Gamma) \in \plift_S (\hat\phi)$ for some $\dynmod{\psi?}\phi \in \ClosedPhi$.
	As $\plift$ is diamond-like, we have
	$\transp{\gamma}(\psi?)(\Gamma) \not= \bot$ and thus, by the definition of
	$\transp{\gamma}$, we have
	$\psi \in \Gamma$ and $\eta_S(\Gamma) \in \plift_S (\hat\phi)$.
	The latter implies $\Gamma \in \hat{\phi}$, ie, $\phi \in \Gamma$.
	Both $\psi \in \Gamma$ and  $\phi \in \Gamma$ imply, using the axiom $\Der \dynmod{\psi?}\phi \leftrightarrow 
	\psi \wedge \phi$, that $\dynmod{\psi?}\phi  \in \Gamma$ as required.
	
         Let $\alpha$ be of the form $\alpha  = \beta^*$ and let $\Gamma \in S$ be such that $\transp{\gamma}(\alpha)(\Gamma) \in \plift_S (\hat\phi)$.
	Then $\transp{\gamma}(\alpha) = \transp{\gamma}(\beta)^*$ and thus
	we have $\transp{\gamma}(\beta)^*(\Gamma) \in \plift_S(\hat\varphi)$. 
	This means that 
	$\bigvee_{j} \klIter{\transp{\gamma}(\beta)}{j} (\Gamma) \in \plift_S(\hat\varphi)$. 
	By diamond-likeness of $\plift$ this is equivalent to the existence of one $j \geq 0$ such that
	$\klIter{\transp{\gamma}(\beta)}{j} (\Gamma) \in \plift_S(\hat\varphi)$.
  	
	In case $j= 0$ we can easily see that
	$\Gamma \in \hat\varphi$, ie, $\varphi \in \Gamma$ which implies - using the axiom
	$(\dynmod{\beta}\dynmod{\beta^*}\phi \vee \phi) \leftrightarrow \dynmod{\beta^*}\phi$  - that $\dynmod{\beta^*}\varphi \in \Gamma$.
		
	Suppose now $j=m+1$, ie, $ \klIter{\transp{\gamma}(\beta)}{m+1} (\Gamma) \in \plift_S(\hat\varphi)$.
	By Lemma~\ref{lem:comp} this implies that
	$$\transp{\gamma}(\beta)(\Gamma) \in \plift_S \left(\{ \Delta \mid \klIter{\transp{\gamma}(\beta)}{m} (\Delta) \in \plift_S(\hat\varphi) \} \right).$$
%
%
	By I.H. on $m$ we have $\{ \Delta \mid \klIter{\transp{\gamma}(\beta)}{m} (\Delta) \in \lambda(\hat\varphi) \} \subseteq
	\widehat{\dynmod{\beta^*}\varphi}$ and hence, by monotonicity of $\plift$, that
	\[ \transp{\gamma}(\beta)(\Gamma) \in \plift_S(\widehat{\dynmod{\beta^*}\varphi}) .\]
	By I.H. on $\beta$ this implies that $\dynmod{\beta}\dynmod{\beta^*}\varphi \in \Gamma$ 
	and thus - using again the same axiom as in the base case - that 
	$\dynmod{\beta^*}\varphi \in \Gamma$.
\end{proof}

\begin{lemma}[Dynamic truth lemma]\label{lem:dyn-truth}
	The coalgebra structure $\gamma: S \to (TS)^\Lbls$ from Def.~\ref{def:coherentgamma} 
	together with the valuation
	$V: P \to \Pow(S)$ given by 
       $V(p) = \hat p$ for $p \in \atProps$
	forms a $\theta$-dynamic $\T$-model such that for all 
	$\phi \in \ClosedPhi$ we have
	$\sem{\phi} = \hat{\phi}$.
\end{lemma}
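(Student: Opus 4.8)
The plan is to deduce the statement from the two coherence lemmas already in hand, by first verifying that the structure $\gamma$ of Definition~\ref{def:coherentgamma} is \emph{coherent} in the sense of Definition~\ref{def:coherent-model}, then invoking the abstract Truth Lemma~\ref{lem:truth}, and finally upgrading $\gamma$ to a genuinely standard structure. Concretely, I would fix $\dynmod{\alpha}\phi \in \ClosedPhi$ and $\Gamma \in S$ and show $\transp{\gamma}(\alpha)(\Gamma) \in \plift_S(\hat\phi)$ iff $\dynmod{\alpha}\phi \in \Gamma$. First I would observe that the Fischer--Ladner closure conditions force $\phi \in \ClosedPhi$ and make every test occurring in $\alpha$ relevant (unwinding conditions 1--4 shows the test formula of each such test is a subformula in $\ClosedPhi$), so that $\alpha$ is relevant and Lemma~\ref{lem:coh1} applies to it.

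The two directions then come from the two coherence lemmas. The implication $\transp{\gamma}(\alpha)(\Gamma) \in \plift_S(\hat\phi) \Rightarrow \dynmod{\alpha}\phi \in \Gamma$ is exactly Lemma~\ref{lem:coh2}. For the converse I would bridge the characteristic formula $\charf{\hat\phi}$ of Lemma~\ref{lem:coh1} with the subformula $\phi$: since $\phi \in \ClosedPhi$, item~2 of Lemma~\ref{lem:derivability} yields $\Der \phi \lra \charf{\hat\phi}$ (using $\toFL{(\toPS{\phi})} = \charf{\hat\phi}$ for $\phi \in \ClosedPhi$), whence $\Der \dynmod{\alpha}\phi \lra \dynmod{\alpha}\charf{\hat\phi}$ by the congruence rule. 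Thus if $\dynmod{\alpha}\phi \in \Gamma$ then $\Gamma \wedge \dynmod{\alpha}\charf{\hat\phi}$ is $\Log$-consistent, and Lemma~\ref{lem:coh1} with $X = \hat\phi$ gives $\transp{\gamma}(\alpha)(\Gamma) \in \plift_S(\hat\phi)$. With coherence established, a direct application of the Truth Lemma~\ref{lem:truth} to $(S,\gamma,V)$ delivers $\sem{\phi} = \hat\phi$ for all $\phi \in \ClosedPhi$.

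It then remains to argue that $(S,\gamma,V)$ really is a $\theta$-dynamic $\T$-model, i.e.\ that $\gamma$ is standard. The clauses of Definition~\ref{def:coherentgamma} coincide verbatim with the standard clauses of Definition~\ref{def:dyn-sem} on atomic actions, pointwise operations, Kleisli composition and iteration; the only potential deviation is the test clause. For an irrelevant test $\phi \notin \ClosedPhi$, $\gamma$ already branches on $\Gamma \in \sem{\phi}$ as the standard definition requires. For a relevant test $\phi \in \ClosedPhi$, $\gamma$ branches on the membership condition $\phi \in \Gamma$; but by the identity $\sem{\phi} = \hat\phi$ just proved, $\phi \in \Gamma$ iff $\Gamma \in \hat\phi = \sem{\phi}$, so this matches the standard clause as well. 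Hence $\gamma$ is generated by $\transp{\gamma}_0 = \gamma_0\restrict\atLbls$ and $V$ in the sense of Definition~\ref{def:dyn-sem}, so $(S,\gamma,\plift,V)$ is a $\theta$-dynamic $\T$-model satisfying $\sem{\phi} = \hat\phi$ on $\ClosedPhi$.

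I expect the only genuine subtlety---everything else being assembly of the preceding lemmas---to be this standardness step, where one must circumvent the apparent circularity between $\gamma$ and $\sem{\cdot}$. Defining the relevant tests through membership is precisely what makes $\gamma$ well defined by induction on the combined term structure of actions and formulas (the irrelevant-test clause recurses only through subactions of strictly smaller formulas, hence is well founded), and the Truth Lemma is exactly what licenses replacing that membership condition by the truth condition after the fact. A secondary point worth double-checking is the relevance claim used in the first paragraph, since Lemma~\ref{lem:coh1} is stated only for relevant actions; this is where the closure conditions on $\ClosedPhi$ do their work.
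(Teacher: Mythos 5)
Your proposal is correct and follows the same route as the paper: establish coherence from Lemmas~\ref{lem:coh1} and~\ref{lem:coh2}, invoke the Truth Lemma~\ref{lem:truth}, and then observe that $\sem{\phi}=\hat\phi$ makes the membership-based test clause agree with the standard one. You merely make explicit two steps the paper leaves implicit (the relevance of actions occurring in $\ClosedPhi$, and the bridge $\Der \phi \lra \charf{\hat\phi}$ via the Derivability Lemma needed to feed $\dynmod{\alpha}\phi\in\Gamma$ into Lemma~\ref{lem:coh1}), both of which are handled correctly.
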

\begin{proof}
	It follows from Lemma~\ref{lem:coh1} and Lemma~\ref{lem:coh2} that for all $\dynmod{\alpha}\phi \in \ClosedPhi$
	we have
	\[ \dynmod{\alpha} \phi \in \Gamma \quad \mbox{ iff } \quad  \transp{\gamma}(\alpha) (\Gamma) \in \plift_S(\hat\phi) .\]
	Therefore it follows by Lemma~\ref{lem:truth} that 
	$\sem{\phi} = \hat\phi$ for all $\phi \in \ClosedPhi$ as required. In particular this shows
	that the resulting model is $\theta$-dynamic, since for all relevant tests $\phi?$ we have $\phi \in \Gamma$
	iff $\Gamma \in \sem{\phi}$. 
\end{proof}

\begin{theorem}\label{thm:compl}
If $\Log_\Diamond = (\{\Diamond\},\Ax,\emptyset,\Ru)$ 
is one-step complete with respect to the $T$-coalgebraic semantics given by $\plift$, and $\theta$ consists of positive operations, then the dynamic logic $\Log=\Logdynseqstartest$ is (weakly) complete with respect to the class of all $\theta$-dynamic $\T$-models. 
\end{theorem}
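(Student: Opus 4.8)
The plan is to read off weak completeness from the finite model constructed in this subsection, so that the theorem itself is essentially an assembly of the preceding lemmas. Weak completeness means that every $\Log$-consistent formula is satisfiable in some $\theta$-dynamic $\T$-model, so fix a $\Log$-consistent formula $\psi$ and set $\ClosedPhi = \nFL{\psi}$, its Fischer--Ladner closure, which is finite and closed by the standard argument. Let $S$ be the set of $\Log$-atoms over $\ClosedPhi$. Since $\psi$ is $\Log$-consistent and $\psi \in \ClosedPhi$, a finite Lindenbaum construction inside $\ClosedPhi$ (adding, for each formula of $\ClosedPhi$, either it or its single negation while preserving consistency) yields an atom $\Gamma \in S$ with $\psi \in \Gamma$, that is, $\Gamma \in \hat\psi$.

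Next I would invoke the existence result: because $\Log_\Diamond$ is one-step complete for $T$, Proposition~\ref{prop:existence} supplies a structure $\gamma_0\colon S \to (TS)^\Lbls$ that is strongly coherent for all $\alpha \in \Lbls$. Feeding $\gamma_0$ into Definition~\ref{def:coherentgamma} produces a structure $\transp{\gamma}\colon \Lbls \to (TS)^S$ together with the canonical valuation $V(p) = \hat p$. The core of the argument is then the Dynamic truth lemma (Lemma~\ref{lem:dyn-truth}): it certifies that $(S,\gamma,\plift,V)$ is a genuine $\theta$-dynamic $\T$-model and that $\truthset{\phi}{(S,\gamma,V)} = \hat\phi$ for every $\phi \in \ClosedPhi$. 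Applying this with $\phi = \psi$ and using $\Gamma \in \hat\psi$ gives $\Gamma \in \truthset{\psi}{(S,\gamma,V)}$, so $\psi$ is satisfied at the state $\Gamma$ of a $\theta$-dynamic $\T$-model, which is exactly weak completeness.

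The only subtlety at the level of the theorem is that the model produced must be standard, so that it really qualifies as a $\theta$-dynamic $\T$-model in the sense of Definition~\ref{def:dyn-sem}; in particular the provisional membership-based definition of $\transp{\gamma}$ on relevant tests must coincide with truth. This is precisely what the closing sentence of Lemma~\ref{lem:dyn-truth} secures, avoiding a mutual induction between truth and standardness.

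The genuine difficulty of the development lies not in this concluding assembly but in the lemmas I am assuming. Establishing strong coherence (Lemma~\ref{lem:existence} and Proposition~\ref{prop:existence}) leans on lifting one-step completeness from $\Log_\Diamond$ to $\Log$ and on diamond-likeness of $\plift$; and the two coherence lemmas for the $*$-case (Lemmas~\ref{lem:coh1} and~\ref{lem:coh2}) are where the real work sits, reducing the fixpoint behaviour of $\transp{\gamma}(\beta)^*$ to the logic-induced reachability operator $F^X_\beta$ together with the $\diamond$-induction rule. It is there that left-distributivity of Kleisli composition over joins and diamond-likeness of $\plift$ are indispensable, and there too that the positivity of $\theta$ is consumed (in the pointwise-operation case of Lemma~\ref{lem:coh1}). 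Once those are in hand, the present theorem follows in a few lines.
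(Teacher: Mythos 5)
Your proof is correct and follows exactly the same route as the paper's: build $S$ from the Fischer--Ladner closure of the consistent formula $\psi$, obtain the coherent structure $\gamma$ via Proposition~\ref{prop:existence} and Definition~\ref{def:coherentgamma}, and conclude by the Dynamic Truth Lemma~\ref{lem:dyn-truth} applied to an atom containing $\psi$. Your added remarks about where the real work lies (strong coherence, the two coherence lemmas, and standardness of the model) accurately reflect the structure of the paper's argument.
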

\begin{proof}
Assume that $\psi$ is an $\Log$-consistent formula.
Let $S$ be the set of $\Log$-atoms over $\ClosedPhi=\nFL{\psi}$ and let $\gamma\colon S \to (TS)^\Lbls$ be defined as in Definition~\ref{def:coherentgamma} and $V$ the valuation given by $V(p) = \hat p$ for $p \in \atProps$. By Lemma~\ref{lem:dyn-truth}, $\M = (S,\gamma,\lambda,V)$ is a $\theta$-dynamic $\T$-model. Since $\psi$ is $\Log$-consistent there is an $\Log$-atom $\Delta \in S$ that contains $\psi$ and hence by the Dynamic Truth Lemma~\ref{lem:dyn-truth}, $\psi$ is true at $\Delta$ in $\M$.
\end{proof}

As corollaries to our main theorem we obtain completeness for a number of concrete dynamic modal logics.

\begin{corollary}
(i) We recover the classic result that PDL is complete with respect to $\cup$-dynamic $\Pow$-models from the fact
that the diamond version of the modal logic \textbf{K} is one-step complete with respect to $\Pow$ (cf.~\cite{SchroPatt:StrongC}),
$\cup$ is a positive natural operation on $\Pow$, and the Kripke diamond $\plift_X(U) = \{ V \in \Pow{X} \mid V \cap U \neq \emptyset\}$
is monotonic and its transpose is a monad morphism.
(ii) Taking as base logic $\Log_\Diamond$ the monotonic modal logic $\mathbf{M}$ with semantics given by the usual
monotonic neighbourhood predicate lifting $\plift_X(U) = \{ N \in \Mon{X} \mid U \in N\}$ with rank-1 axiomatisation
$\Ax = \{\Diamond(p\land q) \to \Diamond p\}$, it is well known that $\Log_\Diamond$ is one-step complete for $\Mon$, see also \cite{HKL:cpdl-report}. Since $\cup$ is a positive natural operation on $\Mon$, we get that
dual-free GL is complete with respect to $\cup$-dynamic $\Mon$-models.
(iii) Similarly, dual-free GL with intersection is complete with respect to $\cup,\cap$-dynamic $\Mon$-models.
\end{corollary}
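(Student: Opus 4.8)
The plan is to recognise each of the three statements as a direct instance of Theorem~\ref{thm:compl}. For each case I would identify the monad $\T$, the predicate lifting $\plift$, the base logic $\Log_\Diamond$ and the signature $\theta$ of pointwise operations, check that they meet the standing assumptions of Section~3 together with the two hypotheses of the theorem --- one-step completeness of $\Log_\Diamond$ and positivity of $\theta$ --- and then apply Theorem~\ref{thm:compl} to conclude weak completeness.

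For (i) I take $\T = \Pow$ (left-quantalic under the union join) and $\plift$ the Kripke diamond, which is monotonic, diamond-like with respect to union, and whose transpose $\transp{\plift}\colon \Pow\To\N$ is a monad morphism (Example~\ref{exa:left-quantalic}); one-step completeness of the diamond version of $\mathbf{K}$ for $\Pow$ is supplied by \cite{SchroPatt:StrongC}, and $\cup$ is a positive natural operation on $\Pow$. For (ii) I take $\T = \Mon$ with the union join and $\plift_X(U) = \{N\in\Mon X \mid U\in N\}$, which is monotonic and diamond-like with respect to union and whose transpose into $\N$ is a monad morphism; one-step completeness of $\mathbf{M}$ for $\Mon$ is standard (recalled in \cite{HKL:cpdl-report}), and $\cup$ is positive on $\Mon$. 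In both cases all hypotheses are in place and the theorem applies.

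The only genuinely new content is (iii), where I must exhibit demonic choice $\cap$ as a positive natural operation on $\Mon$. I interpret $\cap$ by the natural transformation $\sigma\colon\Mon^2\To\Mon$ sending $(N_1,N_2)$ to the intersection $N_1\cap N_2$ of neighbourhood collections; this is again upward closed, so $\Mon$ is closed under $\sigma$, and naturality is immediate from $\Mon f(N_1\cap N_2)=\Mon f(N_1)\cap\Mon f(N_2)$. The matching $\chi\colon\N^2\To\N$ is the corresponding intersection on $\N$, and the compatibility condition $\transp{\plift}\circ\sigma=\chi\circ\transp{\plift}^2$ reduces to the observation that $U\in N_1\cap N_2$ iff $U\in N_1$ and $U\in N_2$. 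Under the Yoneda correspondence $\yon{\chi}$ is then the conjunction of the two generators, so it is built using only $\land$ and is thus positive; the induced rank-1 axiom is $\dynmod{\alpha\cap\beta}p\lra\dynmod{\alpha}p\land\dynmod{\beta}p$. Hence $\theta=\{\cup,\cap\}$ consists of positive operations on $\Mon$ and Theorem~\ref{thm:compl} applies once more.

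The main obstacle is exactly this verification in (iii): one must confirm both that $\Mon$ is closed under pointwise intersection and that the induced neighbourhood operation is positive in the precise sense that $\yon{\chi}$ is expressible using only $\land$ and $\lor$. Everything else is routine, since the required properties of the monads and predicate liftings in (i) and (ii) have already been recorded in the examples of Section~2.
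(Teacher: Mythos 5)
Your proposal is correct and follows the same route as the paper: the corollary is obtained by instantiating Theorem~\ref{thm:compl} with the monad, predicate lifting, base logic and signature indicated in each item, all of whose required properties (left-quantalic structure, diamond-likeness, monad-morphism transpose, one-step completeness, positivity) are already recorded in the examples of Section~2. Your extra verification in (iii) that intersection is a positive natural operation on $\Mon$ with $\yon{\chi}$ the conjunction of generators and induced axiom $\dynmod{\alpha\cap\beta}p\lra\dynmod{\alpha}p\land\dynmod{\beta}p$ is exactly the content the paper asserts without proof in its example on positive natural operations.
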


\section{Conclusion}
There are several ways in which to continue our research.
Firstly we will look for other, new examples that fit into our general coalgebraic framework. A first good candidate seems to be the filter monad $\F$ (cf.~\cite{Gumm05:filter-coalg,Jacobs15:recipe,Wyler81}). It is easy to see that taking upsets yields a monad morphism $\tau\colon\Pow \To \F$ and the induced join on $\F{X}$ is intersection of filters. We note that filters are not closed under unions (only under updirected unions), so $\cup$ is not a natural operation on $\F$.
Taking $\Log_\Diamond$ to be the diamond version of modal logic $\textbf{K}$, and
$\plift\colon \cPo \To \cPo \circ \F$ to be $\plift_X(U) = \{ F \in \F{X} \mid X\setminus U \not\in F \}$
(i.e., the dual of the usual neigbourhood modality), then
$\Log_\Diamond$ is complete with respect to the class of all $\F$-coalgebras, 
since any Kripke model $(X, \rho\colon X \to \Pow{X}, V)$ is pointwise equivalent with
the $\F$-model $(X, \tau\circ \rho\colon X \to \F{X}, V)$, hence any $\phi$ that can be falsified in a Kripke
model can also be falsified in a filter coalgebra, cf.~\cite{Chellas}.
We conjecture that $\Log_\Diamond$ is one-step complete for $\F$ and $\plift$.
From this, a completeness result would follow for a new PDL-like logic for the filter monad with intersection on actions.

Secondly, we will study variations of our coalgebraic framework to monads that carry quantitative information to cover important
cases such as probabilistic and weighted transition systems.
We expect that we need to switch to a multivalued logic, using for example $T(1)$ as truth value object,
as in \cite{Cirstea:fossacs14}.
In general, we would also like to better understand how our exogenous logics relate to the endogenous
coalgebraic logics of \cite{Cirstea:fossacs14} and the weakest preconditions arising from state-and-effect triangles in, e.g., \cite{Jacobs15:recipe,Jacobs-CMCS-2014}. One difference is that in \cite{Cirstea:fossacs14}, the monad $T$ is assumed to be commutative. This condition ensures that the Kleisli category is enriched over Eilenberg-Moore algebras. This could be an interesting approach to obtaining a ``canonical'' algebra of program operations, even though, Eilenberg-Moore algebras do not have canonical representations in terms of operations and equations. Moreover, one of our main example monads, the monotonic neighbourhood monad is not commutative, but it is still amenable to our framework.

Finally, our most ambitious aim will be to extend our coalgebraic framework to a completeness proof which will entail completeness of full GL which remains an open problem~\cite{PaulyParikh:GL-overview}. One reason that this is a difficult problem is that, unlike PDL,
full GL is able to express fixpoints of arbitrary alternation depth~\cite{Berwanger:GL-parity}.


\bibliographystyle{eptcs}
\bibliography{cpdl}

\end{document}